\newcommand\nc\newcommand
\nc\bfa{{\boldsymbol a}}\nc\bfA{{\boldsymbol A}}\nc\cA{{\mathscr A}}
\nc\bfb{{\boldsymbol b}}\nc\bfB{{\boldsymbol B}}\nc\cB{{\mathscr B}}
\nc\bfc{{\boldsymbol c}}\nc\bfC{{\boldsymbol C}}\nc\cC{{\mathscr C}}
\nc\bfd{{\boldsymbol d}}\nc\bfD{{\boldsymbol D}}\nc\cD{{\mathscr D}}
\nc\bfe{{\boldsymbol e}}\nc\bfE{{\boldsymbol E}}\nc\cE{{\mathscr E}}
\nc\bff{{\boldsymbol f}}\nc\bfF{{\boldsymbol F}}\nc\cF{{\mathscr F}}
\nc\bfg{{\boldsymbol g}}\nc\bfG{{\boldsymbol G}}\nc\cG{{\mathscr G}}
\nc\bfh{{\boldsymbol h}}\nc\bfH{{\boldsymbol H}}\nc\cH{{\mathscr H}}
\nc\bfi{{\boldsymbol i}}\nc\bfI{{\boldsymbol I}}\nc\cI{{\mathcal I}}
\nc\bfj{{\boldsymbol j}}\nc\bfJ{{\boldsymbol J}}\nc\cJ{{\mathscr J}}
\nc\bfk{{\boldsymbol k}}\nc\bfK{{\boldsymbol K}}\nc\cK{{\mathscr K}}
\nc\bfl{{\boldsymbol l}}\nc\bfL{{\boldsymbol L}}\nc\cL{{\mathscr L}}
\nc\bfm{{\boldsymbol m}}\nc\bfM{{\boldsymbol M}}\nc{\cM}{{\mathscr M}}
\nc\bfn{{\boldsymbol n}}\nc\bfN{{\boldsymbol N}}\nc\cN{{\mathscr N}}
\nc\bfo{{\boldsymbol o}}\nc\bfO{{\boldsymbol O}}\nc\cO{{\mathscr O}}
\nc\bfp{{\boldsymbol p}}\nc\bfP{{\boldsymbol P}}\nc\cP{{\mathscr P}}\nc\fP{{\mathfrak P}}
\nc\bfq{{\boldsymbol q}}\nc\bfQ{{\boldsymbol Q}}\nc\cQ{{\mathscr Q}}
\nc\bfr{{\boldsymbol r}}\nc\bfR{{\boldsymbol R}}\nc\cR{{\mathscr R}}
\nc\bfs{{\boldsymbol s}}\nc\bfS{{\boldsymbol S}}\nc\cS{{\mathscr S}}
\nc\bft{{\boldsymbol t}}\nc\bfT{{\boldsymbol T}}\nc\cT{{\mathscr T}}
\nc\bfu{{\boldsymbol u}}\nc\bfU{{\boldsymbol U}}\nc\cU{{\mathscr U}}
\nc\bfv{{\boldsymbol v}}\nc\bfV{{\boldsymbol V}}\nc\cV{{\mathscr V}}
\nc\bfw{{\boldsymbol w}}\nc\bfW{{\boldsymbol W}}\nc\cW{{\mathscr W}}
\nc\bfx{{\boldsymbol x}}\nc\bfX{{\boldsymbol X}}\nc\cX{{\mathscr X}}
\nc\bfy{{\boldsymbol y}}\nc\bfY{{\boldsymbol Y}}\nc\cY{{\mathscr Y}}
\nc\bfz{{\boldsymbol z}}\nc\bfZ{{\boldsymbol Z}}\nc\cZ{{\mathscr Z}}
\newtheorem{theorem}{Theorem}
\newtheorem{lemma}[theorem]{Lemma}
\newtheorem{proposition}[theorem]{Proposition}
\newtheorem{corollary}[theorem]{Corollary}
\newtheorem{definition}{Definition}
\newtheorem{construction}{Construction}[section]
\theoremstyle{remark}
\DeclareMathOperator{\rank}{rank}
\DeclareMathOperator{\Span}{Span}
\newcommand{\ff}{{\mathbb F}}
\DeclareMathOperator{\acs}{Acs}
\newcommand{\tr}{\mathrm{tr}}
\newcommand{\zitan}[1]{#1}
\begin{document}
	
		\title{Rack-aware minimum-storage regenerating codes with optimal access}
	
	\author{
		\IEEEauthorblockN{Jiaojiao Wang} \hspace*{1in}
		\and
		\IEEEauthorblockN{Zitan Chen}}
	\maketitle	
	
	{\renewcommand{\thefootnote}{}\footnotetext{
			
			\vspace{-.2in}
			
			\noindent\rule{1.5in}{.4pt}

			{	
				This paper was presented in part at the 2022 IEEE Information Theory Workshop (ITW 2022) \cite{chen2022rack} and will be presented in part at the 2023 IEEE International Symposium on Information Theory (ISIT 2023). 
				
				J. Wang is with the Data Science and Information Technology Research Center, Tsinghua-Berkeley Shenzhen Institute, Tsinghua Shenzhen International Graduate School, Shenzhen, China. (Email: wjj22@mails.tsinghua.edu.cn)			
				
				Z. Chen is with the School of Science and Engineering, the Guangdong Provincial Key Laboratory of Future Networks of Intelligence, The Chinese University of Hong Kong, Shenzhen, China. (Email: chenztan@cuhk.edu.cn)
				His research was supported in part by the Basic Research Project of Hetao
				Shenzhen-Hong Kong Science and Technology Cooperation Zone under
				Project HZQB-KCZYZ-2021067, the Guangdong Provincial Key Laboratory of Future Network of Intelligence under 
				Project 2022B1212010001, and the National Natural Science Foundation of China under grants 62201487 and 12141108.
			}
		}
	}
	\renewcommand{\thefootnote}{\arabic{footnote}}
	\setcounter{footnote}{0}

	\begin{abstract} We derive a lower bound on the amount of information accessed to repair \zitan{failed nodes within a single rack} from any number of helper racks in the rack-aware storage model that allows collective information processing in the nodes that share the same rack. \zitan{Furthermore, we} construct a family of rack-aware minimum-storage regenerating (MSR) codes with \zitan{the property that} the number of symbols accessed for \zitan{repairing a single failed node} attains the bound with equality for all admissible parameters. Constructions of rack-aware optimal-access MSR codes were only known for limited parameters. \zitan{We also present a family of Reed-Solomon (RS) codes that only require accessing a relatively small number of symbols to repair multiple failed nodes in a single rack. In particular, for certain code parameters, the RS construction attains the bound on the access complexity with equality and thus has optimal access.}
	\end{abstract}

\section{Introduction}
The rapid development of distributed storage systems raised the question of how failed nodes in these systems can be efficiently repaired. A large body of literature on erasure codes for distributed storage addressed this question over the past decade. One approach to assess repair efficiency is to measure the so-called repair bandwidth, which is the amount of information downloaded from other nodes for the repair. This approach, first introduced in \cite{dimakis2010network}, assumes the nodes form a homogeneous network and determines the repair bandwidth by considering the information flow in course of the repair. In particular, \cite{dimakis2010network} derived a bound on the smallest number of symbols required for the repair of a single failed node, known as the cut-set bound on the repair bandwidth. Furthermore, the results of \cite{dimakis2010network} showed that there is a trade-off between the minimum repair bandwidth and the storage capacity of the network. In this paper we shall focus on codes with minimum storage overhead and optimal repair bandwidth, namely, maximum distance separable (MDS) codes with optimal repair bandwidth. Such codes are termed minimum-storage regenerating (MSR) codes in the literature.

The problem of MSR codes has received much attention and affords several variants. In its basic form, the problem concerns repair of a single failed node, but it has been generalized to repair of multiple failed nodes. There are generally two different models for repairing multiple failed nodes: the centralized repair model \cite{cadambe2013asymptotic}, \cite{hu2015broadcast}, \cite{ye2017explicit}, \cite{rawat2018centralized} and the cooperative repair model \cite{shum2013cooperative}, \cite{kermarrec2011repairing}, \cite{li2014cooperative}, \cite{ye2018cooperative}. The centralized model assumes that the failed nodes are repaired by a single data collector that received information from the helper nodes and performs the recovery within a single location. In contrast, the cooperative model assumes that all the failed nodes are restored at distinct physical locations and the exchange of information between these locations, in addition to the information downloaded to each of them, is counted toward the overall repair bandwidth. A large number of papers have been devoted to the study of explicit constructions of MSR codes for repairing a single failed node as well as centralized and cooperative repair for multiple failed nodes, including \cite{rashmi2011optimal}, \cite{tamo2012zigzag}, \cite{ye2017explicit}, \cite{sasidharan2016explicit}, \cite{li2018generic}, \cite{elyasi2020cascade}, \cite{ye2018cooperative}, \cite{duursma2021multi}, among others.

Although the problem of MSR codes was initially formulated for homogeneous storage models, various extensions to models of clustered architectures have been proposed in an effort to better capture characteristics of storage systems in the real world. These extensions typically assume the nodes are formed into groups and distinguish the bandwidth cost for communication within a group from the cost incurred when communicating across different groups \cite{akhlaghi2010cost}, \cite{gaston2013realistic}, \cite{pernas2013non}, \cite{sohn2018capacity}. While the aforementioned papers assume the communication cost depends on the memberships of the communicating nodes, they do not allow nodes within the same group to pool their data and collectively process the information before transmitting it to the failed node. A clustered storage model that takes into account the possibility of collective processing of group data was proposed in \cite{hu2016double}, also known as the \emph{rack-aware storage} model. More precisely, the rack model assumes that the nodes are organized into groups of equal size, also called racks. Suppose a node has failed and call the rack that contains the failed node the \emph{host rack}. The repair of the failed node is accomplished by downloading information from nodes in the host rack, called \emph{local nodes}, and information from \zitan{some} other racks. Besides allowing the nodes that share the same helper rack to jointly process the data, the model further assumes that communication within each rack, including the host rack, does not incur any cost toward the repair bandwidth. In this paper we limit ourselves to MSR codes for rack-aware storage, namely, rack-aware MSR codes. Before proceeding to discuss prior work on the rack model in more detail, we mention that there are other variations of clustered architectures such as \cite{tebbi2014code}, \cite{ye2017explicit}, \cite{sahraei2018increasing}, \cite{prakash2018storage} and extensions to general connectivity constraints represented by a graph including \cite{gerami2011optimal}, \cite{lu2014distributed}, \cite{patra2021node}.

A version of the cut-set bound for a single failed node in the rack model was derived in \cite{hu2016double}, \cite{hou2019rack}. Moreover, from the cut-set \zitan{bounds for the rack model and the homogeneous model}, it was observed that rack-aware coding is \zitan{at least as good as homogeneous coding in terms of repair bandwidth}. More precisely, suppose that the storage system consists of $\bar{n}$ racks of the same size $u$ and that $k$ data blocks are encoded into a codeword of length $n=\bar{n}u$ by rack-aware coding, stored across $n$ nodes. If the remainder $v:=k\bmod u$ is nonzero then rack-aware coding \zitan{can have} strictly smaller repair bandwidth than \zitan{rack-oblivious homogeneous coding}. Existence of rack-aware MSR codes were shown \cite{hu2016double}, \cite{hou2019rack}. Constructions of such codes is presented in \cite{hu2017optimal} for $3$ racks and for the case when the number of parity symbols $r:=n-k$ is equal to the rack size $u$. Explicit constructions of rack-aware MSR codes were first provided in \cite{chen2019explicit}, covering all admissible parameters such as the code rate $k/n$, the rack size $u$, and the number of racks $\bar{n}$. The main idea in \cite{chen2019explicit} that enables explicit constructions of rack-aware MSR codes is to utilize the multiplicative structure of the underlying finite field and align it with the group structure of the rack model. Later, \cite{hou2020minimum} \zitan{and} \cite{zhou2021explicit} gave constructions of rack-aware MSR codes that have smaller node size compared \zitan{with} the constructions in \cite{chen2019explicit}. Specifically, \cite{hou2020minimum} proposed a coding framework that converts MSR codes to rack-aware MSR codes based on \zitan{the Schwartz-Zippel lemma, and} the construction with reduced node size in \cite{zhou2021explicit} is found by an algorithmic approach that \zitan{explicitly} determines the parity-check equations for the code.

Apart from the bandwidth, the repair efficiency is impacted by the number of symbols accessed on the helper racks to generate the information to be download by the failed node. For the homogeneous model, some constructions such as \cite{tamo2014access}, \cite{ye2017access}, \cite{zhang2020explicit}, \cite{vajha2021small} have the property that the amount of information accessed on the helper nodes is the smallest possible, thus called \emph{optimal-access} MSR codes. In \cite{chen2019explicit}, the authors derived a lower bound on the number of symbols required to be accessed on the helper racks to repair the failed node. Moreover, \cite{chen2019explicit} presented a construction of rack-aware MSR codes with low access. \zitan{While this low-access construction had the smallest access among the known constructions, it} falls short of attaining the bound \zitan{in \cite{chen2019explicit}}. A new bound on the number of accessed symbols for the case when the number of helper racks is $\bar{n}-1$ was shown in a recent paper \cite{li2021repair}. Interestingly, this new bound implies that the low-access construction in \cite{chen2019explicit} is actually an optimal-access construction when the remainder $v$ of the code dimension $k$ divided by the rack size $u$ is equal to $u-1$. \zitan{However, the problem of constructing rack-aware MSR codes with the smallest access complexity for other code parameters remained open.}

Initially, known constructions of MSR codes such as \cite{rashmi2011optimal}, \cite{ye2017explicit}, \cite{rawat2018centralized} are array codes (or vector codes). The question of constructing \emph{scalar} MSR codes, especially RS codes, with optimal repair bandwidth attracted much interest due to their applications in classical and modern storage systems. Constructions of optimal-repair RS codes for homogeneous storage were studied in \cite{guruswami2017repairing}, \cite{dau2017optimal}, \cite{tamo2018repair}. Moreover, optimal-access RS codes were constructed in \cite{chen2020enabling}. For the rack model, a family of RS codes with optimal bandwidth for repairing a single failed node was presented in \cite{chen2019explicit}. Studies of RS codes with optimal repair bandwidth for the rack model were also presented in \cite{jin2019optimal}, \cite{wang2021rack}. For the access complexity of repairing RS codes in the rack model, it was observed in \cite[Section~3.5.2]{chen2020codes} that the RS codes constructed in \cite{chen2019explicit} can be modified to obtain rack-aware RS codes with low access by incorporating ideas developed in \cite{chen2020enabling} for constructing optimal-access RS codes in the homogeneous model. According to the recent bound in \cite{li2021repair}, this observation in fact leads to a construction of rack-aware RS code with optimal access for the case when the number of helper racks is $\bar{n}-1$ and $v=u-1$. However, the details of this result have not been formally documented. At the same time, there are no general results of rack-aware RS codes with optimal access. 

We note that the problem of repairing \emph{multiple} failed nodes in rack-aware storage was explored in several recent papers \cite{gupta2020rack}, \cite{zhou2021rack}, \cite{wang2021rack}. However, the known results are far from forming a conclusive picture for the problem. 
In particular, to the best of our knowledge, no bounds are known for the access complexity of repairing multiple nodes in rack-aware storage.

\subsection{Main results}
In this paper we present a lower bound on the number of symbols accessed to repair \zitan{multiple failed nodes within a single rack} from any number of helper racks, thereby showing that the low-access construction in \cite{chen2019explicit} is, in fact, an optimal-access construction for repairing a single failed node from any number of helper racks when $v=u-1$. This is the first lower bound for the access complexity of repairing multiple nodes from any number of helper racks. 

Our second result is a family of rack-aware MSR codes that attains the bound we derive \zitan{for repairing a single node} with equality for any $0\leq v<u$. The field size required for the construction is linear in $n$ and the node size is exponential in $\bar{n}$. As a matter of fact, this new family of codes coincides with the low-access construction in \cite{chen2019explicit} if $v=u-1$. 

Lastly, we also present a family of RS codes that supports optimal repair of $h\leq u-v$ failed nodes within a single rack from any number of helper racks. The amount information accessed for repair of these codes is within a factor of $(u-v)/h$ of the lower bound we derive. In particular, when the number of failed nodes in the host rack is exactly $u-v$, this construction gives a family of rack-aware scalar MSR codes with optimal access.

\subsection{Organization}
In section~\ref{sec:ps} we set up notation and formulate the rack-aware storage model for the case of multiple failed nodes in a single host rack. Section~\ref{sec:bound} is devoted to the lower bound on the access complexity of optimal repair schemes. The family of rack-aware MSR array codes with optimal access for a single node failure is presented in Section~\ref{sec:construction} and the family of RS codes with low access for multiple node failures in a single rack is given Section~\ref{sec:rs}. We conclude in Section~\ref{sec:con} with open questions. 

\section{Problem statement}
\label{sec:ps}

Consider an $(n,k,l)$ array code $\cC$ over a finite field $F$. Specifically, $\cC$ is \zitan{a collection} of codewords $c=(c_0,\ldots,c_{n-1})$ with $c_j=(c_{j,0},\ldots,c_{j,l-1})^T\in F^l, j=0,\ldots,n-1$. We assume that the code $\cC$ forms a linear subspace over $F$. \zitan{If $\cC$ is linear over $F^l$, then $\cC$ is called a \emph{scalar} code to stress the linearity property.} Moreover, we assume $\cC$ is MDS, namely, each codeword in $\cC$ can be recovered from any $k$ of its coordinates. Suppose that $n=\bar{n}u$ where $u>1$ is an integer and $k=\bar{k}u+v$ where $v = k \bmod u$. To rule out the trivial case, we assume throughout that $k\geq u$, i.e., $\bar{k}\geq 1$. Denote by $r:=n-k$ the number of parity symbols.
		
Assume that the data file of size $kl$ is divided into $k$ blocks, encoded into a codeword $c\in\cC$, \zitan{and stored in $n$ nodes}. The set of nodes $\{0,\ldots,n-1\}$ is partitioned into $\bar{n}$ subsets of size $u$, i.e., racks of size $u$. Accordingly, the coordinates of the codeword $c$ are partitioned into segments of length $u$, and we label them as $c_j,j=0,\ldots,n-1$ where $j=eu+g$, $e=0,\ldots,\bar{n}-1$ and $g=0,\ldots,u-1$. We do not distinguish between nodes and the coordinates of the codeword, and refer to both of them as nodes. For notational convenience, we write $\bar{c}_e:=(c_{eu},\ldots,c_{eu+u-1})$ to denote the content of the $e$th rack.

\subsection{Optimal repair}

Denote by $\cR\subset \{0,\ldots,\bar{n}-1\}$ the set of helper racks and let $\bar{d}=|\cR|,\bar{k}\leq \bar{d}\leq\bar{n}-1$. Further, \zitan{let $e'$ be the index of the host rack and let $\cF=\{j'_1=e'u+g'_1,\ldots,j'_h=e'u+g'_h\}$ be the set of the failed nodes where $h\geq 1$}. To recover the failed nodes, information is generated from the symbols in the helper racks, i.e., $\bar{c}_{e},e\in\cR$, as well as the the contents of the local nodes $c_{e'u+g},g\in\{0,\ldots,u-1\}\setminus\{\zitan{g'_1,\ldots,g'_h}\}$. A repair scheme with degree $\bar{d}$ is formed of $\bar{d}$ functions $f_t\colon F^{ul}\to F^{\beta_t},t=1,\ldots,\bar{d}$ and a function $g\colon F^{\sum_{t=1}^{\bar{d}}\beta_t}\times F^{(u-\zitan{h})l} \to F^{\zitan{hl}}$. For each $e_t\in\cR$ the function $f_t$ maps $\bar{c}_{e_t}$ to some $\beta_t$ symbols of $F$. The function $g$ accepts these symbols \zitan{from all the helper racks} and the contents of the local nodes in the host rack as arguments, and return the \zitan{values of the failed nodes}:
\begin{align*}
	g(\{f_t(\bar{c}_{e_t}),e_t\in\cR\},\{c_{e'u+g},g\in\{0,\ldots,u-1\}\setminus\{\zitan{g'_1,\ldots,g'_h}\}\}) = \zitan{\{c_{j'_1},\ldots,c_{j'_h}\}} \text{ for all } c\in\cC.
\end{align*} If the functions $f_t,g$ are $F$-linear, the repair scheme is said to be linear. The quantity $\beta(\cR,\zitan{\cF})=\sum_{e\in\cR}\beta_e$ is called the repair bandwidth of recovering the failed nodes \zitan{in $\cF$} from the helper racks in $\cR$.

Let 
\begin{align*}
	\beta(\bar{d},\zitan{h}) := \min_{\cC\subset F^{nl}} \max_{\cR,\zitan{\cF}} \beta(\cR,\zitan{\cF})
\end{align*} where the minimum is taken over all $(n,M=|F|^{kl})$ MDS codes over $F$ (including codes that are not $F$-linear) and the maximum over \zitan{all possible sets of helper racks $\cR$ of size $\bar{d}$ and sets of failed nodes $\cF$ of size $h$}.\footnote{Note that we restrict the failed nodes to be in a single rack.}

A necessary condition for successful repair of a single node is given by \cite{hou2019rack}, \cite{hu2016double}, which states that for any $(n,M=|F|^{kl})$ MDS code, the repair bandwidth satisfies
\begin{align}
	\beta(\bar{d},\zitan{1})\geq \frac{\bar{d}l}{\bar{d}-\bar{k}+1}.\label{eq:rack-bound}
\end{align}
\zitan{For recovery of multiple nodes to be possible, it is clear that the number of failed nodes should satisfy $h\leq \min\{u,r\}$. A more careful inspection of the parameters reveals that \[h\leq \min\{u,(\bar{d}-\bar{k}+1)u-v\}.\] Indeed, by the MDS property, the number of \emph{effective} helper nodes should be at least $k$ and thus $\bar{d}u+u-h\geq k$, implying $h\leq (\bar{d}-\bar{k}+1)u-v\leq r$. A version of the cut-set bound of the repair bandwidth for $h$ failed nodes was mentioned in \cite{chen2019explicit} without proof, which we restated below. 
\begin{proposition}
	\label{prop:rack-bound-h}
	For any $(n,k,l)$ MDS array code over $F$, the repair bandwidth of recovering $h\leq \min\{u,(\bar{d}-\bar{k}+1)u-v\}$ failed nodes from $\bar{d}\geq\bar{k}$ helper racks satisfies 
	\begin{align}
		\beta(\bar{d},h)\geq \frac{h\bar{d}l}{\bar{d}-\bar{k}+1}.\label{eq:rack-bound-h}
	\end{align}
	Furthermore, if $\bar{k} > 1$ then \eqref{eq:rack-bound-h} holds with equality if and only if each helper rack contributes $hl/(\bar{d}-\bar{k}+1)$ symbols of $F$ for the repair of the failed nodes.
\end{proposition}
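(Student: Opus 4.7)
My plan is to prove the bound by an information-theoretic argument that exploits the MDS property together with the functional nature of the helper transcripts, and then to extract the equality condition by tracing tightness through the resulting chain of inequalities. I write $c_S$ for the joint codeword values on a set $S$ of nodes, and $S_e$ for the transcript transmitted by helper rack $e$, so that $H(S_e)\leq\beta_e$ and $S_e$ is a deterministic function of the rack content $c_e$.

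First, I would fix a subset $T\subset\cR$ of $\bar{k}-1$ helper racks and, when $v>0$, a set $V$ of $v$ nodes drawn from a rack outside $T\cup\{e'\}$ (such a rack exists because $\bar{n}\geq\bar{d}+1$). Viewed as node sets, $T$, $e'\setminus\cF$, $V$, $\cF$ are pairwise disjoint and comprise $(\bar{k}-1)u+(u-h)+v+h=k$ nodes in total, so by the MDS assumption their joint entropy equals $kl$. A chain-rule expansion gives $H(c_\cF\mid c_T,c_{e'\setminus\cF},c_V)=hl$. Since the repair scheme reconstructs $c_\cF$ deterministically from $(S_\cR,c_{e'\setminus\cF})$ and conditioning on $c_T$ already supplies $S_T$, one also has $H(c_\cF\mid S_{\cR\setminus T},c_T,c_{e'\setminus\cF},c_V)=0$. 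Subtracting these identities yields $I(c_\cF;S_{\cR\setminus T}\mid c_T,c_{e'\setminus\cF},c_V)=hl$, whence
\[
\sum_{e\in\cR\setminus T}\beta_e\geq H(S_{\cR\setminus T})\geq hl
\]
for every such $T$. Summing this inequality over all $\binom{\bar{d}}{\bar{k}-1}$ subsets $T$ of size $\bar{k}-1$, and using the fact that each $e\in\cR$ lies outside exactly $\binom{\bar{d}-1}{\bar{k}-1}$ of them, the double sum collapses to $\binom{\bar{d}-1}{\bar{k}-1}\beta\geq\binom{\bar{d}}{\bar{k}-1}hl$; combined with the identity $\binom{\bar{d}}{\bar{k}-1}/\binom{\bar{d}-1}{\bar{k}-1}=\bar{d}/(\bar{d}-\bar{k}+1)$, this rearranges to the claimed $\beta\geq h\bar{d}l/(\bar{d}-\bar{k}+1)$.

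For the equality assertion I would track when every inequality above is simultaneously tight. Tightness forces the $S_e$'s to be mutually independent with $H(S_e)=\beta_e$ and requires $\sum_{e\in\cR\setminus T}\beta_e=hl$ for every $(\bar{k}-1)$-subset $T$. When $\bar{k}>1$, swapping a single element between two such subsets shows that the affected pair of $\beta_e$'s coincide, and connectivity of the swap graph on $(\bar{k}-1)$-subsets propagates this to all $\beta_e$, forcing $\beta_e=hl/(\bar{d}-\bar{k}+1)$ for every helper rack. The hypothesis $\bar{k}>1$ is genuinely needed, since for $\bar{k}=1$ only $T=\emptyset$ is admissible and the swap argument degenerates into the single aggregate constraint $\beta\geq hl$.

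The main technical obstacle I anticipate is the precise MDS bookkeeping in the first step: the four sets $T$, $e'\setminus\cF$, $V$, $\cF$ must be assembled into a disjoint union of exactly $k$ nodes so that $H(c_\cF\mid c_T,c_{e'\setminus\cF},c_V)$ is pinned down to $hl$ rather than some smaller value, and the combinatorial averaging over $T$ must produce precisely the coefficient $\bar{d}/(\bar{d}-\bar{k}+1)$. The admissibility hypothesis $h\leq(\bar{d}-\bar{k}+1)u-v$ enters implicitly by guaranteeing that the repair scheme whose existence is invoked when writing $H(c_\cF\mid S_\cR,c_{e'\setminus\cF})=0$ is itself feasible.
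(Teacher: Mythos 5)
Your proposal is correct and follows essentially the same route as the paper's proof: the per-subset inequality $\sum_{e\in\cR\setminus T}\beta_e\geq hl$ for every $(\bar{k}-1)$-subset $T$ of helper racks, the same binomial averaging yielding the factor $\bar{d}/(\bar{d}-\bar{k}+1)$, and an exchange argument (valid only for $\bar{k}>1$) for the equality condition. Your explicit entropy bookkeeping with the padding set $V$ is simply a more detailed justification of the step the paper dispatches tersely via the MDS property, so the two arguments coincide in substance.
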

A proof for Proposition~\ref{prop:rack-bound-h} is provided in the Appendix~\ref{app:rack-bound-h} for completeness.}
The MDS codes that attain the bound \eqref{eq:rack-bound-h} with equality are said to support optimal repair and called rack-aware MSR codes.

\subsection{Optimal access}
In general, the function $f_t$ may have to read all the symbols in its input $\bar{c}_t$ to produce the information required for the repair of the failed nodes, which impacts upon the communication complexity and repair efficiency. Thus, it is desirable to construct codes that support optimal repair and low access. A lower bound on the number of symbols accessed for repair was given in \cite{chen2019explicit}.

\begin{proposition}[\cite{chen2019explicit}]\label{prop:node-access}
	Let $\cC$ be an $(n,k,l)$ MDS array code that supports optimal repair for a single node from $\bar{d}\geq \bar{k}+1$ helper racks where $\bar{k}\geq 1$. The number of symbols accessed on the helper racks satisfies 
	\begin{align}
		\alpha \geq \frac{\bar{d}ul}{d-k+1},\label{eq:rack-access-smallest}
	\end{align} where $d=\bar{d}u+u-1$. If $k>1$ then equality holds if and only if the number of symbols accessed on node $j=eu+g$ satisfies $\alpha_j=l/(d-k+1)$ for all $e\in\cR$ and $g=0,\ldots,u-1$.
\end{proposition}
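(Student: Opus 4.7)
The plan is to pass any linear optimal repair scheme through its dual-code description and combine the MDS structure of $\cC^\perp$ with the equality conditions of the rack-aware cut-set bound. First, I would reformulate such a scheme by a collection of parity checks $\mu^{(1)},\ldots,\mu^{(M)}\in\cC^\perp$ whose support lies in the $d+1=(\bar{d}+1)u$ coordinates of the host and helper racks. Letting $\cV:=\mathrm{span}_F\{\mu^{(i)}\}$, the access on a helper-rack node $j$ satisfies $\alpha_j=\dim_F\mathrm{proj}_j(\cV)$, and recovery of the failed node imposes $\mathrm{proj}_{j^*}(\cV)=F^l$.

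Second, I would exploit the MDS structure of $\cC^\perp$. Shortening $\cC^\perp$ on the coordinates outside the host and helper racks gives a $(d+1,d-k+1,l)$ MDS array code $\cD$ containing $\cV$, so $\dim\cV\le(d-k+1)l$, and every subset projection of $\cD$ satisfies $\dim\mathrm{proj}_S(\cD)=\min\bigl((d-k+1)l,|S|l\bigr)$. At the same time, the equality condition of Proposition~\ref{prop:rack-bound-h} (specialized to $h=1$) forces the rack-level projection $\mathrm{proj}_{\mathrm{rack}_e}(\cV)$ to have dimension exactly $l/(\bar{d}-\bar{k}+1)$ for each helper rack $e\in\cR$.

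Third, combining these ingredients by a dimension-counting argument should yield the per-node access bound. The idea is to use the MDS-induced constraints to argue that, for the rack-level projection dimensions to be attained while also maintaining the full-rank condition at $j^*$, the per-node projections must collectively contribute at least $ul/(d-k+1)$ per helper rack. Summing these per-rack bounds over the $\bar{d}$ helper racks gives $\alpha\ge\bar{d}ul/(d-k+1)$, and the equality case pins down $\alpha_j=l/(d-k+1)$ for every helper-rack node.

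The main obstacle is the last step. The trivial per-rack inequality $\sum_{j\in\mathrm{rack}_e}\alpha_j\ge\dim\mathrm{proj}_{\mathrm{rack}_e}(\cV)=l/(\bar{d}-\bar{k}+1)$ is too weak by the factor $\frac{u(\bar{d}-\bar{k}+1)}{(\bar{d}-\bar{k}+1)u-v}$, which exceeds $1$ precisely when $v>0$. Closing this gap requires exploiting the MDS property of $\cD$ to rule out schemes that concentrate all of a rack's access onto a small subset of its nodes: the MDS consistency of $\cD$ combined with the prescribed rack-level projection dimension and the full-rank condition at $j^*$ should force the per-node access to be spread across the nodes of each helper rack in a way that yields the sharper $l/(d-k+1)$ bound per node.
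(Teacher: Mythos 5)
There is a genuine gap: the step that would actually prove the bound is only asserted, never carried out. Your frame --- dual codewords supported on the host and helper racks, the shortened dual code $\cD$ (the paper's $\cC^{\perp}_{\cT}$) being a $(d+1,d-k+1,l)$ MDS array code, and full rank $l$ at the failed node --- is exactly the frame this paper uses to prove its stronger Theorem~\ref{thm:bound} (Proposition~\ref{prop:node-access} itself is only quoted from \cite{chen2019explicit}), but you stop at ``the MDS consistency \dots should force the per-node access to be spread,'' which is precisely the content of the proposition. The missing argument is a concrete subset-selection-plus-averaging step: since puncturing $\cD$ outside the host and helper racks gives an MDS code of dimension $d-k+1$, the restriction $\tilde{H}_{\cK}$ to \emph{any} $(d-k+1)$-subset $\cK$ of helper-rack coordinates is invertible; writing $P=B\tilde{H}$ with $\rank B=l$ then gives $\rank P_{\cK}=l$, hence $l\le\sum_{j\in\cK}\rank P_j\le\sum_{j\in\cK}\acs P_j$. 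Averaging this inequality over all $\binom{\bar{d}u}{d-k+1}$ such subsets yields $\alpha\ge \bar{d}ul/(d-k+1)$, and in the equality case every subset-sum must equal $l$, so an exchange argument (available once $\bar{d}u>d-k+1$, which is where the hypothesis on $k$ enters) pins down $\acs P_j=l/(d-k+1)$ for every helper node. A small additional inaccuracy: $\alpha_j$ is the number of nonzero columns of $P_j$, which is only $\ge\dim\mathrm{proj}_j(\cV)$, not equal to it; fortunately the inequality is the direction you need.

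A related misdirection in your plan: routing the argument through the rack-level equality $\rank\bar{P}_e=l/\bar{s}$ cannot produce the constant $d-k+1$. If you do use that equality, the natural choice of $\cK$ --- $(\bar{s}-1)$ whole helper racks plus a $(u-v)$-subset of one further rack, exactly as in \eqref{eq:rank-sum} --- gives the strictly stronger per-node bound $l/(\bar{s}(u-v))$ of Theorem~\ref{thm:bound}; if you do not use it, the averaging above already yields Proposition~\ref{prop:node-access} from the full-rank condition at the failed node and the MDS property alone. So either drop the bandwidth-equality ingredient and carry out the averaging over arbitrary $(d-k+1)$-subsets, or keep it and you end up proving the stronger bound instead; as written, your outline establishes neither.
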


This bound \eqref{eq:rack-access-smallest} was later improved in \cite{li2021repair} for the case when $\bar{d}=\bar{n}-1$.

\begin{proposition}[\cite{li2021repair}]\label{prop:access-ruv}
	Let $2\leq \bar{k}\leq \bar{n}-2$ and let $\cC$ be an $(n,k,l)$ MDS array code that supports optimal repair for a single node from $\bar{n}-1$ helper racks. The number of symbols accessed on the helper racks satisfies 
	\begin{align}
		\alpha \geq \frac{\bar{d}ul}{(\bar{n}-\bar{k})(u-v)}.\label{eq:rack-access-all}
	\end{align} Equality holds if and only if the number of symbols accessed on node $j=eu+g$ satisfies $\alpha_j=l/((\bar{n}-\bar{k})(u-v))$ for all $e\in\cR$ and $g=0,\ldots,u-1$.
\end{proposition}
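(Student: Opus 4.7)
The plan is to reduce the access lower bound for single-node repair to the cut-set bound for joint repair of $h = u - v$ failed nodes within a rack (Proposition \ref{prop:rack-bound-h}), and then to promote an aggregate access count into a per-node count using the rigidity forced on the transmitted subspaces when every other rack is a helper. The extra factor $u/(u-v)$ over Proposition \ref{prop:node-access} comes precisely from the $h = u-v$ version of the joint bound, which is available here because $\bar{d} = \bar{n}-1$ means no rack is ``wasted'' as a non-helper.

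First, I would pass to the functional picture: assume without loss of generality that the repair scheme is $F$-linear, and for every failed position $g' \in \{0,\dots,u-1\}$ in a host rack $e'$ and every helper rack $e \neq e'$ let $V_e(g') \subset (F^{ul})^*$ denote the transmitted-functional subspace and $A_{e,g}(g') \subset (F^l)^*$ the subspace of functionals on node $(e,g)$ that are actually read. Proposition \ref{prop:rack-bound-h} with $h = 1$ forces $\dim V_e(g') = l/(\bar{n}-\bar{k})$, and by definition $V_e(g') \subset \bigoplus_g A_{e,g}(g')$. Next, for any subset $G' \subset \{0,\dots,u-1\}$ with $|G'| = u-v$, running the $u-v$ single-node schemes in parallel produces a valid joint scheme for repairing $u-v$ failed nodes whose transmitted subspace from rack $e$ is $\sum_{g' \in G'} V_e(g')$ of dimension at most $(u-v)l/(\bar{n}-\bar{k})$; Proposition \ref{prop:rack-bound-h} with $h = u-v$ forces the same quantity as a lower bound, so equality holds and the $V_e(g')$'s for $g' \in G'$ are in direct sum (via the equality clause of the proposition, which needs $\bar{k} \geq 2$). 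Varying $G'$ over all size-$(u-v)$ subsets of $\{0,\dots,u-1\}$ yields a global ``general position'' structure on the $V_e(g')$'s.

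The final step is a double-counting that converts this rigidity into the per-node access bound. For any $G' \ni g'$ with $|G'| = u - v$, the inclusion
\[
\bigoplus_{g'' \in G'} V_e(g'') \;\subset\; \bigoplus_{g=0}^{u-1} \sum_{g'' \in G'} A_{e,g}(g'')
\]
together with the direct-sum structure of the previous step gives $\sum_g \dim \sum_{g'' \in G'} A_{e,g}(g'') \geq (u-v)l/(\bar{n}-\bar{k})$. A careful averaging over all subsets $G' \ni g'$ --- using the rigidity from Step 2 to prevent the sums $\sum_{g'' \in G'} A_{e,g}(g'')$ from collapsing --- extracts $\sum_g \alpha_{e,g}(g') \geq ul/((\bar{n}-\bar{k})(u-v))$, and summing over the $\bar{n}-1$ helper racks yields \eqref{eq:rack-access-all}. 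The equality clause then falls out by tracing tight inequalities and forces $\alpha_{e,g}(g') = l/((\bar{n}-\bar{k})(u-v))$ at every helper node. The main obstacle is exactly this last averaging: a naive double count only recovers the trivial bandwidth bound $\sum_g \alpha_{e,g}(g') \geq l/(\bar{n}-\bar{k})$, and the improvement by the factor $u/(u-v)$ requires the rigidity from Step 2 to constrain not only the transmitted subspaces but also how the accessed subspaces $A_{e,g}(g')$ overlap across different $g'$; this combinatorial / linear-algebraic step is the technical heart of the argument in \cite{li2021repair}.
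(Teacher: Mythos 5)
There is a genuine gap, and it sits exactly where the proof has to do its real work. Your write-up is a plan rather than a proof: the decisive step---converting the aggregate inequality $\sum_g \dim\sum_{g''\in G'}A_{e,g}(g'')\ge (u-v)l/(\bar n-\bar k)$ into the per-node bound $\alpha_{e,g}(g')\ge l/((\bar n-\bar k)(u-v))$, i.e.\ producing the extra factor $u/(u-v)$ over Proposition~\ref{prop:node-access}---is precisely the step you defer to ``a careful averaging \ldots the technical heart of the argument in \cite{li2021repair}.'' As you concede, the naive double count over $G'$ only reproduces the bandwidth-type bound $\sum_g\alpha_{e,g}(g'')\ge l/(\bar n-\bar k)$, so nothing you have written actually yields \eqref{eq:rack-access-all}; the equality clause is likewise not derived.

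Worse, the rigidity you intend to feed into that averaging is not established. Running the $u-v$ single-node schemes ``in parallel'' is not a valid scheme for jointly repairing a set $G'$ of $u-v$ nodes: the single-node scheme for $g'$ uses all other $u-1$ local nodes of the host rack, including those in $G'\setminus\{g'\}$, which are unavailable when all of $G'$ has failed; moreover the hypothesis of the proposition only guarantees optimal \emph{single}-node repair, so Proposition~\ref{prop:rack-bound-h} with $h=u-v$ cannot be invoked against this composed object. Hence the claimed direct-sum structure of the $V_e(g')$ is unjustified---and even if it held, it constrains the downloaded subspaces, not how the accessed column sets $A_{e,g}(g')$ overlap across different $g'$, which is what your averaging would need. (Your opening ``without loss of generality the scheme is linear'' is also not a harmless reduction for the proposition as stated, though it matches the setting the paper itself analyzes.) For contrast: the paper does not reprove this proposition (it cites \cite{li2021repair}), and its own generalization, Theorem~\ref{thm:bound}, obtains the factor $u-v$ from a different mechanism entirely: a single linear repair scheme is realized by a matrix $P$ of dual codewords; the dual code is expurgated on the non-helper racks and punctured to an MDS code; taking $\cK$ to be $\bar s-1$ whole helper racks together with a $(u-v)$-subset $\cU$ of nodes of one further helper rack forces $\rank P_{\cK}=l$, whence $\sum_{g\in\cU}\acs P_{\tilde e u+g}\ge l/\bar s$ with $\bar s=\bar n-\bar k$, and the averaging is over the subsets $\cU$ \emph{within one helper rack}, not over failure patterns. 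That is the argument you would need to reproduce (or import) to close the gap.
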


We note that the number of symbols accessed on the helper racks for the rack-aware MSR codes in \cite[Section~IV]{chen2019explicit} is equal to $\bar{d}ul/(\bar{d}-\bar{k}+1)$. Therefore, according to \eqref{eq:rack-access-all}, they form a class of rack-aware MSR codes with the least access for repairing a single node from $\bar{d}=\bar{n}-1$ helper racks if $v=u-1$.

\zitan{In Section~\ref{sec:bound}, we derive a new lower bound on the number of accessed symbols for any optimal-bandwidth linear repair scheme with degree $\bar{d}$ and formally define the notion of optimal access for rack-aware MSR codes according to the new bound. In Section~\ref{sec:construction}, we construct a family of rack-aware MSR codes with optimal access for all admissible parameters, where the underlying finite field is of size linear in $n$ and the node size $l$ is exponential in $\bar{n}$.}
In the sequel, we denote $\bar{r}:=\bar{n}-\bar{k}$ and $\bar{s}:=\bar{d}-\bar{k}+1$ for simplicity.

\section{The bound}\label{sec:bound}

In this section, we present a lower bound on the number of symbols accessed in any optimal-bandwidth linear repair scheme 
for $(n,k,l)$ MDS linear array codes. Note that any linear repair scheme of linear codes can be realized by a set of dual codewords and the corresponding parity-check equations. Below we show that by instantiating a linear repair scheme using appropriate dual codewords, the MDS property of the code implies a lower bound on the number of accessed symbols.

\begin{theorem}\label{thm:bound}
	Let $\cC$ be an $(n,k,l)$ MDS linear array code over $F$ that supports optimal repair for any \zitan{$h\leq \min\{u,\bar{s}u-v\}$ nodes within a single rack} from any $\bar{d}\geq \bar{k}$ helper racks, each of which contributes the same amount of information for repair. For any linear repair scheme with degree $\bar{d}$, the number of symbols accessed on the helper racks satisfies 
	\begin{align}
		\alpha \geq \frac{\zitan{h} \bar{d}u l}{\bar{s}(u-v)}.\label{eq:rack-access}
	\end{align} If $v>0$ then equality holds if and only if the number of symbols accessed on the node $j=eu+g$ satisfies $\alpha_j=\zitan{h}l/(\bar{s}(u-v))$ for all $e\in\cR$ and $g=0,\ldots,u-1$.
\end{theorem}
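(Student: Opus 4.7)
The plan is to translate the bound into a rank-counting problem on the parity-check matrix of $\cC$ and then exploit the interaction between the MDS property, the repair surjectivity, and the equal-contribution hypothesis. First, following the standard duality, any linear repair scheme of degree $\bar d$ recovering the $h$ failed nodes in rack $e'$ corresponds to $hl$ dual codewords $\mu^{(1)},\ldots,\mu^{(hl)}\in\cC^\perp$ that vanish on every rack outside $\cR\cup\{e'\}$ and whose restrictions to $\cF$ span $F^{hl}$. Proposition~\ref{prop:rack-bound-h} combined with the equal-contribution hypothesis then forces, for each $e\in\cR$, the subspace $L_e:=\mathrm{span}\{\mu^{(i)}|_{\mathrm{rack}_e}\}_i\subseteq F^{ul}$ to have dimension exactly $hl/\bar s$. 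The access on rack $e$ equals $|\mathrm{supp}(L_e)|$, and by choosing the $\mu^{(i)}$'s to minimize access we may assume $\mu^{(i)}|_{U_e}=0$, where $U_e$ denotes the unaccessed positions on rack $e$; the bound thus reduces to an upper bound on $\sum_{e\in\cR}|U_e|$.

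Next, fixing a parity-check matrix $H\in F^{rl\times nl}$ of $\cC$ and writing $\mu^{(i)}=(y^{(i)})^\top H$, the vectors $y^{(i)}$ span an $hl$-dimensional subspace $Y$ contained in $\tilde Y:=\{y\in F^{rl}: y^\top H|_{Z\cup\bigcup_{e\in\cR} U_e}=0\}$, where $Z$ collects the racks outside $\cR\cup\{e'\}$. The repair requirement becomes the surjectivity of $y\mapsto y^\top H|_\cF$ on $Y$, while the optimality condition becomes $\dim\phi_e(Y)=hl/\bar s$ for each $e\in\cR$, where $\phi_e(y):=y^\top H|_{\mathrm{rack}_e}$. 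The MDS property gives $\mathrm{rank}(H|_T)=|T|l$ for any $T$ consisting of at most $r$ full nodes; in particular, $\mathrm{rank}(H|_Z)=(\bar r-\bar s)ul$ and $\mathrm{rank}(H|_{Z\cup\mathrm{rack}_e})=(\bar r-\bar s+1)ul$ whenever $\bar s\geq 2$ or $v=0$. Rank-nullity then produces two key rank relations: the repair identity
\[
\mathrm{rank}(H|_{Z\cup\bigcup_e U_e\cup\cF})=\mathrm{rank}(H|_{Z\cup\bigcup_e U_e})+hl,
\]
and, from $\dim(Y\cap\ker\phi_e)=hl(\bar s-1)/\bar s$, the optimality inequality
\[
\mathrm{rank}\bigl(H|_{Z\cup\bigcup_{e''\neq e} U_{e''}\cup\mathrm{rack}_e}\bigr)\leq rl-hl(\bar s-1)/\bar s\qquad(e\in\cR).
\]
I will then combine these rank relations with the MDS rank formulas for full-node column sets, and carefully track how the partial columns from each $U_{e''}$ contribute to the rank modulo $H|_{Z\cup\mathrm{rack}_e}$, to derive the per-rack bound $|U_e|\leq ul\,(\bar s(u-v)-h)/(\bar s(u-v))$; summing over $e\in\cR$ gives $\alpha\geq h\bar d ul/(\bar s(u-v))$.

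For the equality assertion (assuming $v>0$), tightness in the bound forces each of the rank inequalities above to be tight simultaneously. Tracing the tightness back through the derivation then pins down $\alpha_j=hl/(\bar s(u-v))$ for every node $j$ in every helper rack.

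The main obstacle is the rank arithmetic of the second paragraph: the rank identity and the optimality inequalities alone do not automatically expose the factor $u/(u-v)$ in the per-rack bound. The crux is a finer partition of each $U_e$ into positions on fully unaccessed nodes versus partially accessed ones, together with a careful accounting of how the partial-column contributions to rank interact with the MDS-determined full-node rank formulas. This generalizes the argument of Proposition~\ref{prop:access-ruv} from~\cite{li2021repair} (which handles the single-node case $h=1$, $\bar d=\bar n-1$) to arbitrary $h$ and $\bar d\geq\bar k$; the boundary regime $\bar s=1$, $v>0$—where $(\bar r-\bar s+1)u>r$ and the full-rack MDS rank formula degenerates—will be handled by a direct argument that reads $L_e\subseteq\phi_e(Y^*)$ inside a proper subspace of $F^{ul}$ and bounds the support through the dimension gap $(u-v)l$.
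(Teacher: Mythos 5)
Your setup---realizing the linear scheme by $hl$ dual codewords that vanish off $\cR\cup\{e'\}$, extracting $\rank \bar P_e = hl/\bar s$ for each helper rack from Proposition~\ref{prop:rack-bound-h} plus the equal-contribution hypothesis, and reducing the access count to the support of these restrictions---coincides with the first half of the paper's proof. But the decisive step is absent: you yourself flag the per-rack bound $|U_e|\le ul\,(\bar s(u-v)-h)/(\bar s(u-v))$ as ``the main obstacle,'' and the ``finer partition \dots careful accounting'' you offer in its place is a plan, not an argument; as you concede, your two rank relations (the repair identity and the optimality inequality) do not by themselves produce the factor $u/(u-v)$. The paper closes exactly this gap with a specific device: pass to the expurgated dual code $\cC^{\perp}_{\cT}$ vanishing on the non-participating racks $\cT$, observe that puncturing it on $\cT$ leaves an MDS code of dimension $r-|\cT|=\bar s u - v$, write $P=B\tilde H$ with $\rank B=hl$, and then choose the column set $\cK$ to consist of an arbitrary $(u-v)$-subset $\cU$ of a single helper rack $\tilde e$ together with all $u$ nodes of $\bar s-1$ other helper racks $\cS$. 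Because $|\cK|=\bar s u-v$ exactly, $\tilde H_{\cK}$ is invertible, so $\rank P_{\cK}=hl$, and subadditivity of rank together with $\rank\bar P_e=hl/\bar s$ on $\cS$ yields $\sum_{g\in\cU}\acs P_{\tilde e u+g}\ge hl/\bar s$ for \emph{every} $(u-v)$-subset $\cU$; averaging over all such subsets is what produces the $u/(u-v)$ factor and the per-rack bound. This combinatorial choice of $\cK$ and the averaging step are precisely what your rank bookkeeping does not supply, so the core of the theorem remains unproved in your proposal.

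The equality case has the same defect: in the paper it follows because equality forces $\sum_{g\in\cU}\acs P_{eu+g}=hl/\bar s$ for every $(u-v)$-subset $\cU$, and comparing two subsets that differ in a single element (possible exactly when $v>0$, i.e., $u-v\le u-1$) localizes the access to $\acs P_{eu+g}=hl/(\bar s(u-v))$ per node; ``tracing tightness back through the derivation'' gives no such per-node mechanism without the subset-averaging structure. Two smaller remarks: your special treatment of the regime $\bar s=1$, $v>0$ is unnecessary in the paper's route, which is uniform in $\bar s$ because $\cK$ then degenerates to the $(u-v)$-subset alone ($|\cK|=u-v=r-|\cT|$); and your rank relations themselves are correct as far as they go, so the proposal is a reasonable scaffold---it simply stops before the argument that makes the theorem true.
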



\begin{proof}
\zitan{Let $e'$ be the index of the host rack and $j'_1=e'u+g'_1,\ldots,j'_{h}=e'u+g'_{h}$ be the indices of the failed nodes.}
Let $\cR\subset \{0,\ldots,\bar{n}-1\}\setminus\{e'\}$ be the set of $\bar{d}$ helper racks. 
Since $\cC$ supports optimal repair, there exists an $\zitan{h}l\times nl$ matrix $P=[P_0,\ldots,P_{n-1}]$ over $F$ whose rows are formed of dual codewords of $\cC$ such that 
\begin{align}
	\zitan{\rank\, (P_{j'_1},\ldots,P_{j'_{h}}) = hl},\label{eq:rank-failed}\\
	\rank \bar{P}_{e} = \frac{\zitan{h}l}{\bar{s}},\quad e\in\cR,\label{eq:rank-helper}
\end{align} where $P_j,j=0,\ldots,n-1$ are $hl\times l$ matrices over $F$ and $\bar{P}_e:=[P_{eu},\ldots,P_{eu+u-1}]$ is an $\zitan{h}l\times ul$ matrix over $F$. Moreover, $\bar{P}_e=0$ for all $e\in\{0,\ldots,\bar{n}-1\}\setminus(\cR\cup \{e'\})$.
Clearly, we have
\begin{align}
	\bar{P}_{e'} \bar{c}_{e'} = -\sum_{e\in\cR} \bar{P}_e\bar{c}_e.\label{eq:repair}
\end{align} To solve \eqref{eq:repair} for \zitan{$c_{j'_1},\ldots,c_{j'_h}$}, we only need to download $\rank\bar{P}_e=\zitan{h}l/\bar{s}$ symbols from each helper rack $e\in\cR$. At the same time, the number of symbols accessed on rack $e$ to compute $\bar{P}_e\bar{c}_e$ is equal to the number of nonzero columns of $\bar{P}_e$, which we denote by $\acs \bar{P}_e$. 

Let $H=[H_0,\ldots,H_{n-1}]$ be an $rl\times nl$ parity-check matrix of $\cC$ where $H_j,j=0,\ldots,n-1$ are $rl\times l$ matrices over $F$. Note that $H$ generates the dual code $\cC^{\perp}$ of $\cC$, which is an $(n,r,l)$ MDS array code. Consider the set
\begin{align*}
	\cT=\{eu+g\mid e\in \{0,\ldots,\bar{n}-1\}\setminus(\cR\cup\{e'\});g=0,\ldots,u-1\}.
\end{align*} Note that $|\cT|=(\bar{r}-\bar{s})u< r$ where the inequality follows from $1\leq \bar{s}\leq \bar{r}$ and $v<u$. Let $\cC^{\perp}_{\cT}$ be the expurgated code of $\cC^{\perp}$ formed by the codewords of $\cC^{\perp}$ that are zero on $\cT$. Since $\cC^{\perp}$ is an $(n,r,l)$ MDS linear array code over $F$, the expurgated code $\cC^{\perp}_{\cT}$ will form an $(n-|\cT|,r-|\cT|,l)$ MDS array code if one further punctures the coordinates of $\cC^{\perp}_{\cT}$ in $\cT$. Moreover, there exists an $(r-|\cT|)l\times rl$ matrix $A$ over $F$ with $\rank A =(r-|\cT|)l$ such that $\tilde{H}:=AH$ is an $(r-|\cT|)l\times nl$ generator matrix for $\cC^{\perp}_{\cT}$.

Let $\cU$ be a $(u-v)$-subset of $\{0,\ldots,u-1\}$ and let $\cS$ be an $(\bar{s}-1)$-subset of $\cR$. Let $\tilde{e}\in\cR\setminus\cS$ and define 
\begin{align*}
	\cK=\{\tilde{e}u+g\mid g\in\cU\}\cup\{eu+g \mid e\in\cS;g=0,\ldots,u-1\}.
\end{align*}
It is clear that $\zitan{u-v}\leq |\cK|=\bar{s}u-v\leq r$. Moreover, we have $|\cK|=r-|\cT|$.

Observe that by \eqref{eq:rank-failed} we have $\rank P=\zitan{h}l$. 
Since $P$ is a matrix formed by codewords of $\cC^{\perp}$ that are zero on $\cT$ and \zitan{$h\leq \min\{u,\bar{s}u-v\}$}, there exists an $\zitan{h}l \times (r-|\cT|)l$ matrix $B$ over $F$ with $\rank B =\zitan{h}l$ such that $P=B\tilde{H}$.
Let us write $\tilde{H}=[\tilde{H}_0,\ldots,\tilde{H}_{n-1}]$ and let $\tilde{H}_{\cK}$ be the $|\cK|l\times |\cK|l$ matrix over $F$ formed of $\tilde{H}_j,j\in\cK$. Since puncturing the coordinates of $\cC^{\perp}_{\cT}$ in $\cT$ gives rise to an MDS code, we have $\rank \tilde{H}_{\cK} =|\cK|l$.  Let $P_{\cK} = B\tilde{H}_{\cK}$. Then it follows that $\rank P_{\cK} = \rank B\tilde{H}_{\cK} = \rank B = \zitan{h}l$. Meanwhile, we have 
\begin{align}
	\rank P_{\cK}\leq \sum_{e\in\cS}\rank \bar{P}_e + \sum_{g\in\cU}\rank P_{\tilde{e}u+g}.\label{eq:rank-sum}
\end{align} From \eqref{eq:rank-helper}, we have $\rank \bar{P}_e=\zitan{h}l/\bar{s}$ for all $e\in\cS$. Therefore, \eqref{eq:rank-sum} implies $\zitan{h}l/\bar{s}\leq \sum_{g\in\cU}\rank P_{\tilde{e}u+g}$. Noticing $\rank P_j\leq \acs P_j$, we have
\begin{align*}
	\frac{\zitan{h}l}{\bar{s}}\leq \sum_{g\in\cU}\acs P_{\tilde{e}u+g}.
\end{align*}
Summing over $(u-v)$-subsets of $\{0,\ldots,u-1\}$ on both sides of the above inequality, we obtain
\begin{align*}
	\binom{u}{u-v}\frac{\zitan{h}l}{\bar{s}}&\leq \sum_{\substack{\cU\subset\{0,\ldots,u-1\}\\|\cU|=u-v}}\sum_{g\in\cU}\acs P_{\tilde{e}u+g}\\
	&=\binom{u-1}{u-v-1}\sum_{g=0}^{u-1}\acs P_{\tilde{e}u+g}\\
	&=\binom{u-1}{u-v-1} \acs\bar{P}_{\tilde{e}}.
\end{align*}
The above inequality holds for all possible choices of $(\bar{s}-1)$-subset $\cS\subset\cR$ and $\tilde{e}\in\cR\setminus\cS$. Therefore, for any $e\in\cR$ it holds that
\begin{align}
	\acs \bar{P}_{e} \geq \frac{\zitan{h} ul}{\bar{s}(u-v)}.\label{eq:access-bound}
\end{align} Moreover, this bound holds with equality if and only if for every $(u-v)$-subset $\cU\subset\{0,\ldots,u-1\}$ it holds that
\begin{align}
	\sum_{g\in\cU}\acs P_{eu+g}=\frac{\zitan{h}l}{\bar{s}}.\label{eq:access-subset}
\end{align}
Equation \eqref{eq:access-subset} further implies the equality in \eqref{eq:access-bound} holds if and only if $\acs P_{eu+g}=\frac{\zitan{h}l}{\bar{s}(u-v)}$ for every $g=0,\ldots,u-1$. Indeed, this is clear for the case $u-v=1$. Consider the case $u-v>1$ and suppose that \eqref{eq:access-bound} holds with equality while there exist some $g_1,g_2$ such that $\acs P_{eu+g_1}>\frac{\zitan{h}l}{\bar{s}(u-v)}$ and $\acs P_{eu+g_2}<\frac{\zitan{h}l}{\bar{s}(u-v)}$. Since $u-v\leq u-1$, there exist $(u-v)$-subsets $\cU_1,\cU_2$ such that $\cU_1\setminus \cU_2=\{g_1\}$ and $\cU_2\setminus \cU_1=\{g_2\}$. But then $\sum_{g\in\cU_1}\acs P_{eu+g} \neq \sum_{g\in\cU_2}\acs P_{eu+g}$, contradicting \eqref{eq:access-subset}.

In conclusion, the number of symbols accessed on $\bar{d}$ helper racks for repairing \zitan{$h\leq \min\{u,\bar{s}u-v\}$ failed nodes in a single rack} satisfies
\begin{align*}
	\alpha \geq \sum_{e\in\cR}\acs \bar{P}_e \geq \frac{\zitan{h} \bar{d}ul}{\bar{s}(u-v)},
\end{align*} where the equality is attained if and only if for all $e\in\cR$ and $g=0,\ldots,u-1$ it holds that $\alpha_{eu+g}=\acs P_{eu+g}=\frac{\zitan{h}l}{\bar{s}(u-v)}$.
\end{proof}

%
Note that for any $\cS,\cU$ and $\tilde{e}$ defined as in the above proof, it follows from \eqref{eq:rank-sum} that
\begin{align}
	\zitan{h}l\leq \sum_{e\in\cS}\rank \bar{P}_e + \sum_{g\in\cU}\acs P_{\tilde{e}u+g}.\label{eq:trade-off}
\end{align} This inequality suggests a trade-off between the number of symbols downloaded from a helper rack and the number of symbols accessed on a helper node. \zitan{Although in the above proof we optimized the repair bandwidth first and used \eqref{eq:trade-off} to deduce a lower bound on the access complexity, one may first minimize the amount of information accessed for repair and then obtain a lower on the repair bandwidth using \eqref{eq:trade-off}.} For instance, \zitan{consider the case $h=1$}. If we put $\acs P_{\tilde{e}u+g}=\frac{l}{d-k+1}$ where $d=\bar{d}u+u-1$, which is the least access per node according to Proposition~\ref{prop:node-access}, then \eqref{eq:trade-off} implies $\sum_{e\in\cS}\rank \bar{P}_e \geq \frac{(\bar{s}-1)ul}{d-k+1}$. Noticing $\cS$ is an arbitrary $(\bar{s}-1)$-subset of $\cR$, one can deduce that the repair bandwidth in this case is $\beta\geq \frac{\bar{d}ul}{d-k+1}=\zitan{\frac{\bar{d}ul}{\bar{d}u-\bar{k}u+u-v}}$. Thus, in this case, the bandwidth cannot attain the bound \eqref{eq:rack-bound} unless \zitan{$v=0$, i.e., }$u\mid k$. In other words, if an MDS linear array code admits of a linear repair scheme that has the least access for repairing a single node in the rack model, then the scheme does not support optimal repair for the rack model unless $u\mid k$. At the same time, if we apply optimal-access MSR codes for the homogeneous model in the rack model, then they attain the bound \eqref{eq:rack-access-smallest} for access \zitan{in the rack model} and their repair bandwidth is $\frac{\bar{d}ul}{d-k+1}$.\footnote{This was observed in \cite{li2021repair} for the case $\bar{d}=\bar{n}-1$.}

In the following, we define rack-aware optimal-access MSR codes as MDS array codes that attain \zitan{\eqref{eq:rack-bound-h}} and \eqref{eq:rack-access} with equality simultaneously.

\begin{definition}
	Let $\cC$ be an $(n,k,l)$ MDS linear array code over $F$ that supports optimal repair for any \zitan{$h\leq \min\{u,\bar{s}u-v\}$ nodes in a single rack} from any $\bar{d}$ helper racks. Suppose that each of the $\bar{d}$ helper racks provides $\zitan{h}l/\bar{s}$ symbols for the repair of the failed nodes and these symbols are generated by accessing $\zitan{h}l/(\bar{s}(u-v))$ symbols of each node in the rack. Then $\cC$ is said to be a rack-aware MSR code with optimal access.
\end{definition}

As mentioned before, by Proposition~\ref{prop:access-ruv}, the low-access rack-aware MSR codes in \cite{chen2019explicit} turn out to have optimal access \zitan{for repairing a single node} if $\bar{d}=\bar{n}-1$ and $v=u-1$. Theorem~\ref{thm:bound} further implies that the low-access codes actually have the optimal access property \zitan{for repairing a single node} whenever $\bar{d}\leq \bar{n}-1$ and $v=u-1$.

\section{The array code construction}\label{sec:construction}

In this section, we construct a family of rack-aware MSR codes for any $\bar{d}\leq \bar{n}-1$ and any $v\leq u-1$ \zitan{that can repair any single failed node with optimal access.} The size of underlying finite field of the codes is $O(n)$ and the node size is $(\bar{s}(u-v))^{\bar{n}}$. The construction of this new family of codes and its repair scheme can be viewed as an extension to the low-access construction in \cite{chen2019explicit}. In particular, setting $u-v=1$, we obtain the low-access construction. To enable optimal access, we observe that the same set of symbols in helper racks can be utilized to generate $u-v$ distinct linear combinations to be downloaded for repair. This observation, combined with a careful design of parity-check equations, leads to the optimal construction. 

To illustrate the main techniques behind our general code construction, we first present in Section~\ref{sec:example} a simple code $\cC_0$ that supports repairing any single failed node in the \emph{zeroth} rack\footnote{Recall that we label the racks by $0,1,\ldots,\bar{n}-1$.} by downloading $\bar{d}l/\bar{s}$ symbols and accessing $\bar{d}ul/(\bar{s}(u-v))$ symbols, attaining the bounds \eqref{eq:rack-bound-h} and \eqref{eq:rack-access}, respectively. In fact, one can easily modify the construction to obtain a code $\cC_e$ that supports repairing any single failed node in the $e$th rack with the same amount of repair bandwidth and access complexity as $\cC_0$. To construct a general code with optimal access for a single failed node in \emph{any} host rack, we combine the codes $\cC_0,\ldots,\cC_{\bar{n}-1}$ by carefully coupling the symbols in the same rack and expanding the node size. We note that the idea of node size expansion has been used extensively in the literature for constructing high-rate MSR codes.

\zitan{For notational convenience, denote $\eta = u-v$ and $\theta=\bar{s}\eta$.} Let $m=\bar{n}+\bar{s}-1$ and $mu\mid(|F|-1).$ Let $\lambda\in F$ be an element of multiplicative order $mu$. Recall that we write $j={e}u+{g}$ for the index of the node $j=0,\ldots,n-1$, where $0\le {e}< \bar{n}$ and $0\le {g}< u$. For $p=0,\ldots,\zitan{\theta}-1$, let us write $p=\kappa \bar{s}+\tau$ where $0\leq \kappa < \zitan{\eta}$ and $0\leq \tau <\bar{s}$.

\subsection{Repairing any single node in a fixed rack}\label{sec:example}
In this subsection, we present a construction of MDS linear array codes over $F$ that support repair of any single failed node in the zeroth rack from any $\bar{d}$ helper racks. 

\begin{construction} 
	Define an $(n,k=n-r,l=\zitan{\theta})$ array code $\cC_0=\{(c_{j,i})_{0\le j\le n-1; 0\le i\le l-1}\}$ by the following parity-check equations over $F$:
	\begin{align}
		\sum_{j=0}^{n-1}\lambda_j^{t}c_{j,0}+\sum_{g=0}^{u-1}\sum_{p=1}^{\theta-1}\mu_{p,g}^tc_{g,p}=0,\label{eq:defu}\\
		\sum_{j=0}^{n-1}\lambda_j^{t}c_{j,i}=0,\quad i=1,\ldots,\theta-1.\label{eq:defui}
	\end{align}
	where $t=0,\ldots,r-1$, $\lambda_j = \lambda^{{e}+{g}m}$, and 
	\begin{align}
		\mu_{p,g}=
		\begin{cases}
			\lambda^{(g+\kappa)m}, & \tau = 0,\\
			\lambda^{\bar{n}+\tau-1+\kappa m}, & \tau\neq 0.
		\end{cases}\label{eq:mu-simple}
	\end{align}
\end{construction}

Note that for each $i\in\{1,\ldots,\theta-1\}$, the collection of \eqref{eq:defui} for which $t=0,\ldots,r-1$ defines an $(n,k)$ MDS code. Using this fact, it is easy to check that any $k$ nodes of $\cC_0$ suffice to recover the other $n-k$ nodes and thus $\cC_0$ is MDS. 
The repair properties of $\cC_0$ are stated in the following theorem.

\begin{theorem} \label{thm:opt-c0}
	The code $\cC_0$ supports repair of any single failed node in the zeroth rack from any $\bar{d}$ helper racks by downloading from the $\bar{d}$ helper racks $\bar{d}\eta$ symbols of $F$ and these symbols are generated by accessing $\bar{d}u$ symbols in the helper racks. 
\end{theorem}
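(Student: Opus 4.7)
My plan is to construct an explicit linear repair scheme for any failed node $g' \in \{0, \ldots, u-1\}$ in rack $0$ from any set $\cR$ of $\bar{d}$ helper racks, and then to verify three things simultaneously: (a) the scheme yields $\theta = \bar{s}\eta$ independent linear equations in the $\theta$ failed symbols $c_{g',0}, \ldots, c_{g',\theta-1}$; (b) the equations involve only $\bar{d}\eta$ scalars downloaded from $\cR$; and (c) those scalars can be computed from $\bar{d}u$ accessed symbols, that is, one symbol per node in each helper rack.

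My approach would be to organize the $\theta$ equations into $\eta$ layers indexed by $\kappa = 0, \ldots, \eta-1$, with $\bar{s}$ equations per layer, so that within each layer the helper-rack contributions to the $\bar{s}$ equations consolidate into a single downloaded scalar per rack (yielding the $\eta$ downloads per rack required by (b)), and the same set of accessed symbols supports all $\eta$ downloads (yielding the $u$ accesses per rack required by (c)). The key enabling identity is $\lambda_j^{\kappa u} = \lambda^{eu\kappa}$ for $j = eu+g$, a direct consequence of $\lambda^{mu} = 1$, which causes the shift of $t$ by $\kappa u$ to act on each helper rack as a scalar multiplication depending only on $(e,\kappa)$, rather than as a re-weighting of the individual symbols in the rack.

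For property (a), I would show that the $\bar{s}$ equations within each layer form a nonsingular Vandermonde-type system on the $\bar{s}$ failed symbols of that layer. After pulling out a common factor $\lambda^{\kappa m}$ from each row, the Vandermonde nodes reduce (by substituting $\lambda_{g'} = \lambda^{g'm}$ and $\mu_{p, g'}$ from \eqref{eq:mu-simple}) to $\lambda^{g'm}, \lambda^{\bar{n}}, \lambda^{\bar{n}+1}, \ldots, \lambda^{\bar{n}+\bar{s}-2}$. Since $\lambda$ has multiplicative order $mu$ and $m = \bar{n}+\bar{s}-1$, the exponent $g'm$ lies in $\{0, m, 2m, \ldots, (u-1)m\}$ while the other exponents lie in the interval $[\bar{n}, m-1]$; these two sets are disjoint modulo $mu$ (since $0 < \bar{n}$ and $\bar{n}+\bar{s}-2 < m$, so none of the latter is a multiple of $m$), and hence the $\bar{s}$ nodes are pairwise distinct for every $g' \in \{0, \ldots, u-1\}$. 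Together with the block structure of the full coefficient matrix across the $\eta$ layers, this yields solvability of the $\theta \times \theta$ system.

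The main obstacle I expect is designing the exact pairing between the layer indices $(\kappa, \tau)$ and the parity-check indices $(i, t)$ drawn from \eqref{eq:defu}--\eqref{eq:defui} so that all three properties hold simultaneously: the layer decomposition must be consistent with both the Vandermonde structure on the failed-symbol coefficients and the access-collapse of the helper-rack contributions. The two-case formula for $\mu_{p, g}$ in \eqref{eq:mu-simple}, which treats the $\tau = 0$ and $\tau \neq 0$ cases separately, is engineered precisely to enable this simultaneous consistency, and the bulk of the remaining verification reduces to direct substitution into the parity checks and tracking the resulting exponents of $\lambda$ modulo $mu$.
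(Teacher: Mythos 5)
Your plan misses the racks that are neither the host nor helpers. The parity checks \eqref{eq:defu} involve \emph{every} rack, and when $\bar d<\bar n-1$ there are $\bar n-\bar d-1$ bystander racks whose symbols are unknown and may not be accessed or downloaded; yet you budget exactly $\theta=\bar s\eta$ checks and assert they give $\theta$ equations in the $\theta$ failed symbols with only downloaded scalars (and local symbols) on the known side. Those bystander contributions must first be eliminated, and that costs extra equations: the paper takes $t=uw+z$ with $w=0,\ldots,\bar r-1$, $z=0,\ldots,\eta-1$ (so $\bar r\eta>\bar s\eta$ checks when $\bar d<\bar n-1$) and, for each fixed $z$, inverts the $\bar r\times\bar r$ Vandermonde of \eqref{eq:ee} whose nodes are $\lambda^{e_iu}$ for the non-helper racks (host included) together with $\lambda^{(\bar n+\tau-1)u}$, $\tau=1,\ldots,\bar s-1$, distinct precisely because $e_i\le\bar n-1<\bar n+\tau-1$. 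This elimination is where the two-case design of $\mu_{p,g}$ does its real work, and it is absent from your outline; as written your argument could only cover the case $\bar d=\bar n-1$.

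The within-layer structure you describe also cannot be realized. To get one downloaded scalar per rack per layer you must keep $t$ inside a single residue class $z$ modulo $u$ (that is exactly what $\lambda^{mu}=1$ buys); but then the coefficient of the $\tau=0$ symbol $c_{g',\kappa\bar s}$ equals $\lambda^{(g'+\kappa)mz}$, constant across the layer's rows, so your claimed node $\lambda^{g'm}$ never appears, and every row still contains all $\eta$ values of $\kappa$, with the columns sharing a common $\tau$ proportional to one another. Hence within one layer you can only recover the $\bar s$ aggregates $\sum_{\kappa}\mu_{\kappa\bar s+\tau,g'}^{z}c_{g',\kappa\bar s+\tau}$ (after subtracting local-node terms), never individual symbols; separating the $\kappa$'s needs a second, cross-layer step, namely for each fixed $\tau$ an $\eta\times\eta$ Vandermonde in $z$ with nodes $\mu_{\kappa\bar s+\tau,g'}$, $\kappa=0,\ldots,\eta-1$ --- the transpose of the ``for each $\kappa$, over $\tau$'' grouping you propose (this is the paper's \eqref{eq:f111}--\eqref{eq:f222}). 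Relatedly, the repair should use \eqref{eq:defu} only: bringing in \eqref{eq:defui} forces access to helper symbols $c_{eu+g,i}$ with $i\neq0$ and breaks the $\bar d u$ access count. Your exponent bookkeeping modulo $mu$ is fine as far as it goes, but the bystander-rack elimination and the correct two-stage decomposition are the substance of the proof, not routine substitution.
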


\begin{proof}
	The repair of any single failed node in the zeroth rack relies only the equations \eqref{eq:defu}.
	To simply the notation, let us assume without loss of generality that $c_0$ is the failed node. 
	
	Let $\cR\subset \{1,\dots,\bar{n}-1\}$ be the set of $\bar{d}$ helper racks and let $\cJ=\{0,\dots,\bar{n}-1\}\setminus \cR$.
	Rearranging \eqref{eq:defu} such that all the information offered by the helper racks appears on one side, we obtain
	\begin{align}
		\sum_{{e}\in{\cJ}}\sum_{{g}=0}^{u-1}\lambda_{{e} u+{g}}^t c_{{e} u+{g},0}
		+
		&\sum_{{g}=0}^{u-1}\sum_{p=1}^{\zitan{\theta}-1}\mu_{p,g}^t c_{{g},p}
		= - 
		\sum_{{e}\in\cR}\sum_{g=0}^{u-1}
		\lambda_{{e}u+{g}}^t c_{{e}u+{g},0}.\nonumber
	\end{align}
	Writing $p=\kappa\bar{s}+\tau$ and further rearranging the above equation, we have
	\begin{align}
			\sum_{g=0}^{u-1}(\lambda_g^{t}c_{g,0}+\sum_{\kappa=1}^{\eta-1}\mu_{\kappa\bar{s},g}^tc_{g,\kappa \bar{s}})
			+\sum_{g=0}^{u-1}\sum_{\kappa=0}^{\eta-1}\sum_{\tau=1}^{\bar{s}-1}\mu_{\kappa\bar{s}+\tau,g}^tc_{g,\kappa \bar{s}+\tau}
			+&\sum_{e\in \cJ\setminus \{0\}}\sum_{g=0}^{u-1} \lambda_{{e} u+{g}}^t c_{{e} u+{g},0}  \nonumber\\
			& =- 
			\sum_{{e}\in\cR}\sum_{g=0}^{u-1}
			\lambda_{{e}u+{g}}^t c_{{e}u+{g},0}.\label{eq:defuk0ex}
	\end{align}
	Denoting the right-hand side of \eqref{eq:defuk0ex} by $\sigma_t$ and using $t=uw+z$, $\lambda_{{e}u+{g}}=\lambda^{{e}+{g}m},\lambda^{um}=1$, and the expression of $\mu_{p,g}$ in \eqref{eq:mu-simple}, we can turn \eqref{eq:defuk0ex} into
	\begin{align}
		\sum_{g=0}^{u-1} \Big(\lambda_{g}^z c_{{g},0}
		+
		\sum_{\kappa=1}^{\zitan{\eta}-1}\mu_{\kappa\bar{s},g}^{z}c_{g,\kappa\bar{s}}\Big)
		+
		&\sum_{{e}\in{\cJ}\setminus \{0\}} \lambda^{{e} uw} \sum_{g=0}^{u-1} \lambda_{eu+g}^z c_{{e} u+{g},0}\nonumber\\
		&+
		\sum_{\tau=1}^{\bar{s}-1}\lambda^{(\bar{n}+\tau-1) uw} 
		\sum_{{g}=0}^{u-1} \sum_{\kappa=0}^{\zitan{\eta}-1}\mu_{\kappa\bar{s}+\tau,g}^{z}c_{g,\kappa\bar{s}+\tau}=\sigma_{uw+z},\label{eq:defuk0exwu}
	\end{align}
	where $w=0,\ldots,\bar{r}-1$ and $z=0,\ldots,{\eta}-1$. Let us write $\cJ=\{e_1=0,e_2,\dots,e_{\bar{n}-\bar{d}}\}$. Define $\alpha_i:=\lambda^{{e}_iu},i=1,\ldots,\bar{n}-\bar{d}$ and $\beta_{\tau}:=\lambda^{(\bar{n}+\tau-1)u}, \tau=1,\ldots,\bar{s}-1$. For each $z$, let us write equations \eqref{eq:defuk0exwu} for all $w=0,\ldots,\bar{r}-1$ in matrix form:
	\begin{equation}\label{eq:ee}
		\underbrace{\left[\begin{array}{ccccccc}
			1                    & \cdots & 1                                    & 1                   & \cdots & 1                             \\
			\alpha_1             & \cdots & \alpha_{\bar{n}-\bar{d}}             & \beta_1             & \cdots & \beta_{\bar{s}-1}             \\
			\vdots               & \ddots & \vdots                               & \vdots              & \ddots & \vdots                        \\
			\alpha_1^{\bar{r}-1} & \cdots & \alpha_{\bar{n}-\bar{d}}^{\bar{r}-1} & \beta_1^{\bar{r}-1} & \cdots & \beta_{\bar{s}-1}^{\bar{r}-1}
		\end{array}
			\right]}_{M}
		\left[\begin{array}{c}
			\sum_{g=0}^{u-1}(\lambda_g^{z}c_{g,0}+\sum_{\kappa=1}^{\eta-1}\mu_{\kappa\bar{s},g}^zc_{g,\kappa \bar{s}}) \\
			\sum_{g=0}^{u-1}\lambda_{e_2u+g}^z c_{e_2u+g,0}\\
			\vdots\\
			\sum_{g=0}^{u-1}\lambda_{e_{\bar{n}-\bar{d}}u+g}^z c_{e_{\bar{n}-\bar{d}}u+g,0}\\
			\sum_{g=0}^{u-1}\sum_{\kappa=0}^{\eta-1}\mu_{\kappa\bar{s}+1,g}^zc_{g,\kappa \bar{s}+1}\\
			\vdots\\
			\sum_{g=0}^{u-1}\sum_{\kappa=0}^{\eta-1}\mu_{\kappa\bar{s}+\bar{s}-1,g}^zc_{g,\kappa \bar{s}+\bar{s}-1}
		\end{array}
		\right]=
		\left[\begin{array}{c}
			\sigma_{z}  \\
			\sigma_{u+z}\\
			\vdots\\
			\sigma_{u(\bar{r}-1)+z}
		\end{array}
		\right].
	\end{equation}
	Clearly, the matrix $M$ in \eqref{eq:ee} is invertible since $\bar{n}+\tau-1>e_i$ for all $\tau$ and $e_i$. So for all $z=0,\ldots,{\eta}-1$ the values in 
	\begin{align}
		&\Big\{\sum_{g=0}^{u-1}\Big(\lambda_{g}^z c_{{g},0}
		+
		\sum_{\kappa=1}^{{\eta}-1}\mu_{\kappa\bar{s},g}^{z}c_{g,\kappa\bar{s}}\Big)
		\Big\},\label{eq:f1}\\
		&\Big\{
		\sum_{{g}=0}^{u-1} \sum_{\kappa=0}^{{\eta}-1}\mu_{\kappa\bar{s}+\tau,g}^{z}c_{g,\kappa\bar{s}+\tau} \mid \tau=1,\ldots,\bar{s}-1 \Big\},\label{eq:f2}\\
		&\Big\{\sum_{{g}=0}^{u-1}\lambda_{eu+g}^z c_{eu+g,0} \mid e\in\cJ\setminus\{0\}\Big\}\nonumber
	\end{align}
	can be found from $\{\sigma_{uw+z}\mid w = 0, \ldots, \bar{r}-1\}$.
	From \eqref{eq:f1}, \eqref{eq:f2} and the local nodes $\{c_g|g=1,\ldots,u-1\}$, we can find
	\begin{align}
		&\Big\{\lambda_{0}^z c_{{0},0}
		+
		\sum_{\kappa=1}^{{\eta}-1}\mu_{\kappa\bar{s},0}^{z}c_{0,\kappa\bar{s}}
		\Big\},\label{eq:f11}\\
		&\Big\{
		\sum_{\kappa=0}^{{\eta}-1}\mu_{\kappa\bar{s}+\tau,0}^{z}c_{0,\kappa\bar{s}+\tau} \mid \tau=1,\ldots,\bar{s}-1 \Big\}.\label{eq:f22}
	\end{align}
	Collecting \eqref{eq:f11} and \eqref{eq:f22} for $z=0,1,\ldots,\eta-1$ and writing them as matrix-vector multiplication, we obtain
	\begin{equation}\label{eq:f111}
		{\left[\begin{array}{cccc}
			1                    & 1                        & \cdots & 1                                \\
			\lambda_{0}          & \mu_{\bar{s},0}          & \cdots & \mu_{(\eta-1)\bar{s},0}          \\
			\vdots               & \vdots                   & \cdots & \vdots                           \\
			\lambda_{0}^{\eta-1} & \mu_{\bar{s},0}^{\eta-1} & \cdots & \mu_{(\eta-1)\bar{s},0}^{\eta-1}
		\end{array}
			\right]}
		\left[\begin{array}{c}
			c_{0,0} \\
			c_{0,\bar{s}}\\
			\vdots\\
			c_{0,(\eta-1)\bar{s}}
		\end{array}
		\right]
		=\left[\begin{array}{c}
			c_{0,0}+\sum_{\kappa=1}^{\eta-1}c_{0,\kappa \bar{s}}\\
			\lambda_0c_{0,0}+\sum_{\kappa=1}^{\eta-1}\mu_{\kappa\bar{s},0}c_{0,\kappa \bar{s}}\\
			\vdots\\
			\lambda_0^{\eta-1}c_{0,0}+\sum_{\kappa=1}^{\eta-1}\mu_{\kappa\bar{s},0}^{\eta-1}c_{0,\kappa \bar{s}}
		\end{array}
		\right],
	\end{equation}
	\begin{equation}\label{eq:f222}
		{\left[\begin{array}{cccc}
			1                     & 1                             & \cdots & 1                                     \\
			\mu_{\tau,0}          & \mu_{\bar{s}+\tau,0}          & \cdots & \mu_{(\eta-1)\bar{s}+\tau,0}          \\
			\vdots                & \vdots                        & \cdots & \vdots                                \\
			\mu_{\tau,0}^{\eta-1} & \mu_{\bar{s}+\tau,0}^{\eta-1} & \cdots & \mu_{(\eta-1)\bar{s}+\tau,0}^{\eta-1}
		\end{array}
			\right]}
		\left[\begin{array}{c}
			c_{0,\tau} \\
			c_{0,\bar{s}+\tau}\\
			\vdots\\
			c_{0,(\eta-1)\bar{s}+\tau}
		\end{array}
		\right]
		=\left[\begin{array}{c}
			\sum_{\kappa=0}^{\eta-1}c_{0,\kappa \bar{s}+\tau} \\
			\sum_{\kappa=0}^{\eta-1}\mu_{\kappa\bar{s}+\tau,0}c_{0,\kappa \bar{s}+\tau}\\
			\vdots\\
			\sum_{\kappa=0}^{\eta-1}\mu_{\kappa\bar{s}+\tau,0}^{\eta-1}c_{0,\kappa \bar{s}+\tau}
		\end{array}
		\right].
	\end{equation}
	Since $\lambda_0$ and $\mu_{\kappa\bar{s}+\tau,0}$, $\kappa=1,\cdots,\eta-1$ are all different and the same holds for $\mu_{\kappa\bar{s}+\tau,0}$, $\kappa=0,1,\cdots,\eta-1$, we can calculate from \eqref{eq:f111} and \eqref{eq:f222} the values of $(c_{0,\kappa\bar{s}+\tau})_{0\leq \kappa < \eta,0\leq \tau<\bar{s}}$. This shows $c_0$ can be recovered from any $\bar{d}$ helper racks. 
	
	During the repair process, the values we need from the helper racks are $\{ \sigma_{uw+z} \mid w=0,\ldots,\bar{r}-1;z=0,\ldots,\eta-1\}$. By definition,
	\begin{align*}
		\sigma_{uw+z}&:=- 
		\sum_{{e}\in\cR}\sum_{g=0}^{u-1}
		\lambda_{{e}u+{g}}^{uw+z} c_{{e}u+{g},0}\\
		& =- 
		\sum_{{e}\in\cR}
		\lambda^{euw}
		\sum_{g=0}^{u-1}
		\lambda_{{e}u+{g}}^{z} c_{{e}u+{g},0},
	\end{align*} where the last equality follows from $\lambda_{eu+g}=\lambda^{e+gm}$ and $\lambda^{mu}=1$. Thus, to repair $c_0$, we need to access the symbols $\{c_{eu+g,0}\mid e\in \cR; g=0,\ldots,u-1\}$ in the nodes of the helper racks and download the symbols $\{\sum_{g=0}^{u-1}
\lambda_{{e}u+{g}}^{z} c_{{e}u+{g},0}\mid e\in \cR;z=0,\ldots,\eta-1\}$. It follows that the number of accessed symbols is $\bar{d}u$ and the number of downloaded symbols is $\bar{d}\eta$, which attain the bound \eqref{eq:rack-access} and \eqref{eq:rack-bound-h}, respectively.
\end{proof}

\subsection{The general construction}\label{sec:general}

We need some more notation to describe the construction.
Let \zitan{$l=\theta^{\bar{n}}$ and let} $i=(i_{\bar{n}-1},\ldots,i_0)$ be the \zitan{$\theta$}-ary representation of $i=0,\ldots,l-1$. \zitan{Namely, we represent the integer $i$ with base $\theta$, and for $0\le a \le \bar{n}-1$, the integer $i_a$ is the $a$th digit of the $\theta$-ary representation of $i$.} Further, let $i(a,b)=(i_{\bar{n}-1},\ldots,i_{a+1},b,i_{a-1},\ldots,i_0)$ \zitan{where $0\le b \le \zitan{\theta}-1$}. \zitan{In other words, $i(a,b)$ is obtained from $i$ by replacing the $a$th digit of the $\theta$-ary representation of $i$ by $b$.} For brevity below we use the notation $$\delta(i):=\mathbbm{1}_{\{i=0\}}.$$

%

\begin{construction} \label{con:oa}
	Define an $(n,k=n-r,l=\zitan{\theta^{\bar{n}}})$ array code $\cC=\{(c_{j,i})_{0\le j\le n-1; 0\le i\le l-1}\}$ by the following parity-check equations over $F$:
	\begin{align}
		\sum_{j=0}^{n-1}\lambda_j^t c_{j,i} + 
		\sum_{j=0}^{n - 1} 
		\delta(i_e) 	
		\sum_{p=1}^{\zitan{\theta}-1}\mu_{p,j}^t c_{j,i({e},p)} 
		=0,\label{eq:oa-code-pc}
	\end{align}
	where $i=0,\ldots,l-1$, $t=0,\ldots,r-1$, $\lambda_j = \lambda^{{e}+{g}m}$, and 
	\begin{align}
		\mu_{p,j}=
		\begin{cases}
			\lambda^{e+(g+\kappa)m}, & \tau = 0,\\
			\lambda^{\bar{n}+\tau-1+\kappa m}, & \tau\neq 0.
		\end{cases}\label{eq:mu_pj}
	\end{align}
\end{construction}

\begin{theorem} \label{thm:opt-single}
	The code $\cC$ given by Construction~\ref{con:oa} is a rack-aware MSR code with optimal access for repairing any single node from any $\bar{d}$ helper racks.
\end{theorem}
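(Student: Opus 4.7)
The plan is to reduce Theorem~\ref{thm:opt-single} to a slice-by-slice analysis that lifts the proof of Theorem~\ref{thm:opt-c0} to the expanded node size $l=\theta^{\bar n}$. Specifically, I would establish in turn: (i) $\cC$ is MDS; (ii) any single failed node in any host rack can be repaired from any $\bar d\geq\bar k$ helper racks with bandwidth $\bar d l/\bar s$; and (iii) the repair uses only $\bar d u l/\theta$ accessed symbols, thus matching the bound \eqref{eq:rack-access} for $h=1$ with equality.

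For the MDS property, I would induct on the number of zero digits in the $\theta$-ary expansion of the coordinate index $i$. In the base case, when no digit of $i$ is zero, every $\delta(i_e)$ factor vanishes and \eqref{eq:oa-code-pc} reduces to $\sum_j\lambda_j^t c_{j,i}=0$; this defines an $(n,k)$ Reed-Solomon code in the $i$-th layer because $\lambda_j=\lambda^{e+gm}$ gives $n$ distinct exponents modulo $mu$. For $i$ with at least one zero digit, the coupling terms in \eqref{eq:oa-code-pc} involve only coordinates $i(e,p)$ with $p\ne 0$, which have strictly fewer zero digits; by the inductive hypothesis any $k$ entries at such $i(e,p)$ uniquely determine the remaining $n-k$ entries, so after substituting these known coupling terms the equation at $i$ again reduces to an $(n,k)$ Reed-Solomon system.

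For the repair of $c_{j^*}$ with $j^*=e'u+g^*$, I would partition the $l$ coordinates into $l/\theta$ \emph{slices}, each indexed by a tuple $i_{-e'}\in\{0,\ldots,\theta-1\}^{\bar n-1}$ and consisting of the $\theta$ indices $(i_{-e'},p)$ for $p=0,\ldots,\theta-1$, where $p$ occupies the $e'$-th digit. For each slice I would use the $r$ parity-check equations at $i=(i_{-e'},0)$. Since $i_{e'}=0$, the $\delta(i_{e'})$ factor is active at every host-rack node, so the $\theta$ failed symbols $\{c_{j^*,(i_{-e'},p)}\}_{p=0}^{\theta-1}$ all enter the equation through the coefficients $\mu_{p,j^*}^t$ from \eqref{eq:mu_pj}. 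After moving helper-rack contributions to the right-hand side, the resulting system has exactly the algebraic form of \eqref{eq:defuk0ex}--\eqref{eq:ee} in the proof of Theorem~\ref{thm:opt-c0}, with $\lambda_{e'u+g}$ and $\mu_{p,e'u+g}$ playing the roles of $\lambda_g$ and $\mu_{p,g}$; the Vandermonde inversion carried out there then recovers the $\theta$ failed symbols of the slice. Per slice this downloads $\bar d\eta$ symbols from the helper racks, summing to $\bar d l/\bar s$ overall and matching \eqref{eq:rack-bound-h} for $h=1$.

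The main obstacle is the access accounting, which must handle the $\delta(i_e)$ factors active at helper racks $e\ne e'$: when the $e$-th digit of the slice index $i_{-e'}$ is zero, the equation for that slice also involves helper-rack symbols at coordinates obtained from $(i_{-e'},0)$ by changing its $e$-th digit to a nonzero $p$, i.e., coordinates belonging to \emph{different} slices. The task is to verify that the specific choice of $\mu_{p,j}$ in \eqref{eq:mu_pj} makes these cross-slice symbols coincide with the ``primary'' accessed symbols of those other slices, so that the union of accessed coordinates at helper rack $e$ taken over all $l/\theta$ slices equals exactly $\{c_{eu+g,i}:i_{e'}=0,\,0\le g<u\}$, of size $ul/\theta$. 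Summing over the $\bar d$ helper racks then yields $\bar d u l/\theta=\bar d u l/(\bar s(u-v))$ accessed symbols, matching \eqref{eq:rack-access} with equality and completing the proof that $\cC$ is a rack-aware MSR code with optimal access.
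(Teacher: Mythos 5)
Your slice-by-slice reduction to the proof of Theorem~\ref{thm:opt-c0} has a genuine gap in the repair argument: it only accounts for the coupling factors $\delta(i_e)$ at the host rack and at the helper racks, but in Construction~\ref{con:oa} these factors are active at \emph{every} rack $e$ with $i_e=0$, including racks in $\cJ\setminus\{e'\}$ that are neither helpers nor the host (such racks exist whenever $\bar{d}<\bar{n}-1$). For a slice whose index has $i_e=0$ at such a rack, the parity-check equation at $i=(i_{-e'},0)$ contains the symbols $c_{eu+g,i(e,p)}$, $p\neq 0$, of an \emph{unavailable} rack, taken from a different slice; these are unknowns and cannot be moved to the right-hand side, so the resulting system is not of the form \eqref{eq:defuk0ex}--\eqref{eq:ee} and your per-slice Vandermonde inversion does not close. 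This is exactly why the paper's proof runs an induction over the sets $\cI_1,\ldots,\cI_{\bar{n}-\bar{d}}$ (organized by which digits at positions in $\cJ\setminus\{0\}$ vanish), and why its induction hypothesis \eqref{eq:oa-induction-hypothesis} carries along not only the recovered host-rack symbols but also the aggregates $\sum_{g}\lambda_{eu+g}^{z}c_{eu+g,i}$ for the non-helper racks $e\in\cJ\setminus\{0\}$: those aggregates are precisely what is needed to compute the correction terms $\pi_{i,\tau}^{(z)}(\cJ_a)$ (and their $\tau=0$ analogues) so that the equations at the next level again reduce to an invertible system. Your proposal has no counterpart of this bookkeeping, so as written it establishes repair only in the special case $\bar{d}=\bar{n}-1$.

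The other two components are essentially sound. Your MDS argument is a valid variant of the paper's (induction on the total number of zero digits rather than on subsets of the racks containing the erased coordinates), provided the inductive hypothesis is stated as ``the layer-$i'$ entries of the unknown nodes have already been recovered from the $n-r$ known nodes'' rather than as a per-layer MDS property, since individual layers are not MDS codes in isolation. Your access idea is also the right one — the coupling terms at a helper rack $e$ only involve coordinates $i(e,p)$ whose $e'$-th digit is still zero, so the accessed set per helper rack is exactly $\{c_{eu+g,i}\mid i_{e'}=0,\,0\le g<u\}$ of size $ul/\theta$ — but you leave it as a ``task to verify,'' whereas the paper verifies it explicitly (the computation following \eqref{eq:oa-access}, which also shows the needed cross-slice quantities are recoverable from the $\eta$ downloaded combinations per coordinate). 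Finally, note a small inconsistency: you propose to use all $r$ parity-check equations per slice yet claim a per-slice download of $\bar{d}\eta$; matching \eqref{eq:ee} and the bandwidth bound requires restricting to the powers $t=uw+z$ with $w\le\bar{r}-1$ and $z\le\eta-1$, as the paper does.
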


\begin{proof}
The proof consists of two parts. We begin with showing that the code $\cC$ allows a linear repair scheme that has both optimal bandwidth and access, and then we complete the proof by showing $\cC$ is MDS.
 
\emph{1) The bandwidth and access properties:}
In the following we present a repair scheme for the code $\cC$ in Construction~\ref{con:oa} for repairing a single node from any $\bar{d}$ helper racks with optimal repair bandwidth and access. Without loss of generality, let us assume for simplicity that $c_{0}$ is the failed node, and thus the index of the host rack is $0$.

As before, $\cR\subset\{1,\ldots,\bar{n}-1\}$ is the set of $\bar{d}$ helper racks. Let $\cJ=\{0,\ldots,\bar{n}-1\}\setminus\cR.$
Note that the index of the host rack is always in $\cJ$, i.e., $0\in\cJ$, and we write $\cJ=\{e_1=0,e_2,e_3,\ldots,e_{\bar{n}-\bar{d}}\}$.
For a given integer $a, 1 \le a \le \bar{n}-\bar{d}$ we will need $a$-subsets of $\cJ$, which we denote by $\cJ_{a}$. We always assume that the index of the host rack is in $\cJ_{a}$. In other words, $\cJ_a$ is an $a$-subset of $\cJ$ such that $0\in\cJ_a$. \zitan{In particular, $\cJ_1=\{0\}$. Let $\cI$ be the set of integers $i\in\{0,\ldots,l\}$ such that $i_0=0$ and} let
$$
\cI_1=\{ i
\in\{0,\ldots,l-1\} \mid i_0 = 0; i_{{e}} \neq 0, {e}\in{\cJ}\setminus{\cJ}_1 \}
$$ 
and define 
$$
\cI_a=\bigcup_{\cJ_a\subseteq{\cJ}}\cI(\cJ_a), \quad a=2,\dots,\bar n-\bar d,
$$
where
$$
\cI(\cJ_a)=\{ i
\in\{0,\dots,l-1\} \mid i_{e}=0,{e}\in\cJ_a; i_{{e}} \neq 0, {e}\in{\cJ}\setminus\cJ_a \}.
$$

We will use the parity-check equations corresponding to $i\in\cI$ and all powers $t=uw+z$ to repair the failed nodes $c_0$ where $w=0,\ldots,\bar{r}-1$ and $z=0,\ldots,\zitan{\eta}-1$, in contrast to the repair schemes in \cite{chen2019explicit} which only involve the parity-check equations corresponding to the powers divisible by $u$. 
Such a set of parity-check equations is well defined since $uw+z\leq r-1$ for any $w\leq \bar{r}-1$ and $z\leq \zitan{\eta}-1$. 

We argue by induction on $a=1,\ldots,\bar{n}-\bar{d}$ to show that the repair is possible. Let us first prove the induction basis by showing that it is possible to recover the values 
\begin{align*}
	&\{c_{0,i(0,p)} \mid p=0,\ldots,\zitan{\theta}-1 \},\\
	&\Big\{\sum_{{g}=0}^{u-1}\lambda_{e_2u+g}^z c_{e_2u+g,i},\ldots,\sum_{{g}=0}^{u-1}\lambda_{e_{\bar{n}-\bar{d}}u+g}^z c_{e_{\bar{n}-\bar{d}}u+g,i} \mid z=0,\ldots,\zitan{\eta}-1\Big\}
\end{align*}
for every $i\in\cI_1$ from the helper racks $\cR$.
 
\zitan{Recall that $\cJ\cup \cR=\{0,\ldots,\bar{n}-1\}$ and that we write $j=eu+g$ for the node index $j$. Rewriting} \eqref{eq:oa-code-pc} for $i\in\cI_1$, we have
\begin{align}
	\sum_{{e}\in{\cJ}}\sum_{{g}=0}^{u-1}\lambda_{{e} u+{g}}^t c_{{e} u+{g},i}
	+
	&\sum_{{g}=0}^{u-1}\sum_{p=1}^{\zitan{\theta}-1}\mu_{p,g}^t c_{{g},i(0,p)}\nonumber\\
	&= - 
	\sum_{{e}\in\cR}\sum_{g=0}^{u-1}\Big(
	\lambda_{{e}u+{g}}^t c_{{e}u+{g},i}
	+
	\delta(i_e)
	\sum_{p=1}^{\zitan{\theta}-1}\mu_{p,eu+g}^t c_{{e}u+{g},i({e},p)}\Big).\label{eq:oa-code-pc-i1}
\end{align}
To shorten our notation, denote the right-hand side of \eqref{eq:oa-code-pc-i1} by $\sigma_{i,w}^{(z)}({\cJ}_1)$, where $uw+z=t$. Note that the value of $\sigma_{i,w}^{(z)}({\cJ}_1)$ only depends on the helper racks. \zitan{Next, recall that $p=\kappa\bar{s}+\tau$. We rearrange the terms on the left-hand side of \eqref{eq:oa-code-pc-i1} depending on whether $\tau=0$.} Then using $t=uw+z$, $\lambda_{{e}u+{g}}=\lambda^{{e}+{g}m}$, $\lambda^{mu}=1$, and \eqref{eq:mu_pj}, \zitan{we can turn \eqref{eq:oa-code-pc-i1} into}
\begin{align}
	\sum_{g=0}^{u-1} \Big(\lambda_{g}^z c_{{g},i}
	+
	\sum_{\kappa=1}^{\zitan{\eta}-1}\mu_{\kappa\bar{s},g}^{z}c_{g,i(0,\kappa\bar{s})}\Big)
	+
	&\sum_{{e}\in{\cJ}\setminus \{0\}} \lambda^{{e} uw} \sum_{g=0}^{u-1} \lambda_{eu+g}^z c_{{e} u+{g},i}\nonumber\\
	&+
	\sum_{\tau=1}^{\bar{s}-1}\lambda^{(\bar{n}+\tau-1) uw} 
	\sum_{{g}=0}^{u-1} \sum_{\kappa=0}^{\zitan{\eta}-1}\mu_{\kappa\bar{s}+\tau,g}^{z}c_{g,i(0,\kappa\bar{s}+\tau)}=\sigma_{i,w}^{(z)}({\cJ}_1),\label{eq:oa-code-pc-i}
\end{align} $i\in\cI_1;w=0,\ldots,\bar{r}-1;z=0,\ldots,\zitan{\eta}-1$.
For $i=1,\ldots,\bar{n}-\bar{d}$ 
define $\alpha_i:=\lambda^{{e}_iu}$ and for $\tau=1,\ldots,\bar{s}-1$ define $\beta_{\tau}:=\lambda^{(\bar{n}+\tau-1)u}$. For each $z$, let us write equations \eqref{eq:oa-code-pc-i} for all $w=0,\ldots,\bar{r}-1$ in matrix form:
\begin{align}\label{eq:oa-ma}
	\begin{bmatrix}
		1 & \cdots & 1 & 1 & \cdots  & 1 \\ 
		\beta_1 & \cdots & \beta_{\bar{s}-1} & \alpha_{1} & \cdots  & \alpha_{\bar{n}-\bar{d}} \\
		\vdots & \ddots & \vdots & \vdots & \ddots & \vdots \\
		\beta_1^{\bar{r}-1} & \cdots & \beta_{\bar{s}-1}^{\bar{r}-1} & \alpha_{1}^{\bar{r}-1} & \cdots  & \alpha_{\bar{n}-\bar{d}}^{\bar{r}-1} 
	\end{bmatrix}
	\begin{bmatrix}
		\sum_{{g}=0}^{u-1} \sum_{\kappa=0}^{\zitan{\eta}-1}\mu_{\kappa\bar{s}+1,g}^{z}c_{g,i(0,\kappa\bar{s}+1)}\\
		\vdots \\
		\sum_{{g}=0}^{u-1} \sum_{\kappa=0}^{\zitan{\eta}-1}\mu_{\kappa\bar{s}+\bar{s}-1,g}^{z}c_{g,i(0,\kappa\bar{s}+\bar{s}-1)}\\
		\sum_{g=0}^{u-1}\Big(\lambda_{g}^z c_{{g},i}
		+
		\sum_{\kappa=1}^{\zitan{\eta}-1}\mu_{\kappa\bar{s},g}^{z}c_{g,i(0,\kappa\bar{s})} \Big)\\
		\sum_{{g}=0}^{u-1}\lambda_{e_2u+g}^z c_{e_2u+g,i} \\
		\vdots \\
		\sum_{{g}=0}^{u-1}\lambda_{e_{\bar{n}-\bar{d}}u+g}^z c_{e_{\bar{n}-\bar{d}}u+g,i}
	\end{bmatrix} =
	\begin{bmatrix}
		\sigma_{i,0}^{(z)}(\cJ_1) \\
		\sigma_{i,1}^{(z)}(\cJ_1) \\
		\vdots  \\
		\sigma_{i,\bar{r}-1}^{(z)}(\cJ_1)
	\end{bmatrix}.
\end{align}
\zitan{Since $\alpha_i,i=1,\ldots,\bar{n}-\bar{d}$ and $\beta_{\tau},\tau=1,\ldots,\bar{s}-1$ are all distinct,} the matrix on the left-hand side of \eqref{eq:oa-ma} is invertible.
Therefore, for all $z=0,\ldots,\zitan{\eta}-1$, the values in 
\begin{align}
	&\Big\{\sum_{g=0}^{u-1}\Big(\lambda_{g}^z c_{{g},i}
	+
	\sum_{\kappa=1}^{\zitan{\eta}-1}\mu_{\kappa\bar{s},g}^{z}c_{g,i(0,\kappa\bar{s})}\Big)
	\Big\},\label{eq:oa-coupled-sum-1}\\
	&\Big\{\sum_{{g}=0}^{u-1}\lambda_{eu+g}^z c_{eu+g,i} \mid e\in\cJ \setminus\{0\}\Big\},\nonumber\\
	&\Big\{
	\sum_{{g}=0}^{u-1} \sum_{\kappa=0}^{\zitan{\eta}-1}\mu_{\kappa\bar{s}+\tau,g}^{z}c_{g,i(0,\kappa\bar{s}+\tau)} \mid \tau=1,\ldots,\bar{s}-1 \Big\}\label{eq:oa-coupled-sum-2}
\end{align}
can be found from the values $\{\sigma_{i,w}^{(z)}(\cJ_1)\mid w = 0, \ldots, \bar{r}-1\}$. Then from \eqref{eq:oa-coupled-sum-1}, \eqref{eq:oa-coupled-sum-2}, and the local nodes $\{c_g\mid g=1,\ldots,u-1\}$ of the host rack, we can further find the values in 
\begin{align}
	&\Big\{
	\lambda_{0}^z c_{0,i}
	+
	\sum_{\kappa=1}^{\zitan{\eta}-1}\mu_{\kappa\bar{s},0}^{z}c_{0,i(0,\kappa\bar{s})} 
	\Big\},\label{eq:oa-coupled-1}\\
	&\Big\{
	\sum_{\kappa=0}^{\zitan{\eta}-1}\mu_{\kappa\bar{s}+\tau,0}^{z}c_{0,i(0,\kappa\bar{s}+\tau)}\mid \tau =1,\ldots,\bar{s}-1\Big\}\label{eq:oa-coupled-2}.
\end{align}
Collecting the values in \eqref{eq:oa-coupled-1} from all $z=0,\ldots,\zitan{\eta}-1$ and writing \zitan{them as matrix-vector multiplication}, we have 
\begin{align}
	\begin{bmatrix}
		1 & 1 & \cdots & 1 \\
		\lambda_{0} & \mu_{\bar{s},0} & \cdots & \mu_{(\zitan{\eta}-1)\bar{s},0}\\
		\vdots  & \vdots & \ddots & \vdots\\
		\lambda_0^{\zitan{\eta}-1} & \mu_{\bar{s},0}^{\zitan{\eta}-1} & \cdots & \mu_{(\zitan{\eta}-1)\bar{s},0}^{\zitan{\eta}-1} 
	\end{bmatrix}
	\begin{bmatrix}
		c_{0,i}\\
		c_{0,i(0,\bar{s})}\\
		\vdots\\
		c_{0,i(0,(\zitan{\eta}-1)\bar{s})}
	\end{bmatrix} 
	&=
	\begin{bmatrix}
		c_{0,i}
		+
		\sum_{\kappa=1}^{\zitan{\eta}-1} c_{0,i(0,\kappa\bar{s})} \\
		\lambda_{0} c_{0,i}
		+
		\sum_{\kappa=1}^{\zitan{\eta}-1}\mu_{\kappa\bar{s},0} c_{0,i(0,\kappa\bar{s})} \\
		\vdots  \\
		\lambda_{0}^{\zitan{\eta}-1} c_{0,i}
		+
		\sum_{\kappa=1}^{\zitan{\eta}-1}\mu_{\kappa\bar{s},0}^{\zitan{\eta}-1}c_{0,i(0,\kappa\bar{s})}
	\end{bmatrix}.\label{eq:oa-ma-local-1}
\end{align}
Similarly, collecting the values in \eqref{eq:oa-coupled-2} from all $z=0,\ldots,\zitan{\eta}-1$ for each fixed $\kappa\bar{s}+\tau$ \zitan{where} $\tau\in\{1,\ldots,\bar{s}-1\}$, we have the following matrix equation
\begin{align}
	\begin{bmatrix}
		1 & 1 & \cdots & 1 \\
		\mu_{\tau,0} & \mu_{\bar{s}+\tau,0} & \cdots & \mu_{(\zitan{\eta}-1)\bar{s}+\tau,0} \\
		\vdots  & \vdots & \ddots & \vdots\\
		\mu_{\tau,0}^{\zitan{\eta}-1} & \mu_{\bar{s}+\tau,0}^{\zitan{\eta}-1} & \cdots & \mu_{(\zitan{\eta}-1)\bar{s}+\tau,0}^{\zitan{\eta}-1}
	\end{bmatrix}
	\begin{bmatrix}
		c_{0,i(0,\tau)}\\
		c_{0,i(0,\bar{s}+\tau)}\\
		\vdots\\
		c_{0,i(0,(\zitan{\eta}-1)\bar{s}+\tau)}
	\end{bmatrix} &=
	\begin{bmatrix}
		\sum_{\kappa=0}^{\zitan{\eta}-1} c_{0,i(0,\kappa\bar{s}+\tau)}\\
		\sum_{\kappa=0}^{\zitan{\eta}-1}\mu_{\kappa\bar{s}+\tau,0} c_{0,i(0,\kappa\bar{s}+\tau)}\\
		\vdots  \\
		\sum_{\kappa=0}^{\zitan{\eta}-1}\mu_{\kappa\bar{s}+\tau,0}^{\zitan{\eta}-1} c_{0,i(0,\kappa\bar{s}+\tau)}
	\end{bmatrix}.\label{eq:oa-ma-local-2}
\end{align}
\zitan{Observe that} the matrices on the left-hand sides of \eqref{eq:oa-ma-local-1} and \eqref{eq:oa-ma-local-2} are invertible. Thus, the values in $\{c_{0,i(0,\kappa\bar{s}+\tau)}\mid \kappa=0,\ldots,\zitan{\eta}-1;\tau=0,\ldots,\bar{s}-1 \}$ can be found from \eqref{eq:oa-coupled-1} and \eqref{eq:oa-coupled-2} for every $i\in\cI_1$. This completes the proof of the induction basis.

Now let us fix $a\in\{2,\dots,\bar n-\bar d\}$ and suppose that for all $i\in\cI_{a'}$ and $1\le a' \le a-1$ we have recovered the values 
\begin{align}
	&\{c_{0,i(0,p)}\mid p=0,\ldots,\zitan{\theta}-1 \},\nonumber\\
	&\Big\{\sum_{{g}=0}^{u-1}\lambda_{e_2u+g}^z c_{e_2u+g,i},\ldots,\sum_{{g}=0}^{u-1}\lambda_{e_{\bar{n}-\bar{d}}u+g}^z c_{e_{\bar{n}-\bar{d}}u+g,i} \mid z=0,\ldots,\zitan{\eta}-1\Big\}\label{eq:oa-induction-hypothesis}
\end{align}
from the information downloaded from the helper racks $\cR.$

Fix a subset $\cJ_a, |\cJ_a|=a,$ and let $i\in \cI(\cJ_a)$. 
From \eqref{eq:oa-code-pc}, we have
\begin{align}
	\sum_{{e}\in{\cJ}}\sum_{{g}=0}^{u-1}\lambda_{{e} u+{g}}^t c_{{e} u+{g},i}
	+
	&\sum_{{e}\in {\cJ_a}}\sum_{{g}=0}^{u-1}\sum_{p=1}^{\zitan{\theta}-1}\mu_{p,eu+g}^t c_{{e} u+{g},i({e},p)}\nonumber\\
	&= -\sum_{{e}\in\cR} \sum_{{g}=0}^{u-1}
	\Big(\lambda_{{e}u+{g}}^t  {c_{{e}u+{g},i}
		+
		\delta(i_e)}
	\sum_{p=1}^{\zitan{\theta}-1}
	\mu_{p,eu+g}^t c_{{e}u+{g},i({e},p)}\Big).\label{eq:oa-code-pc-ia}
\end{align}
Again for notational convenience denote the right-hand side of \eqref{eq:oa-code-pc-ia} by $\sigma_{i,w}^{(z)}( {\cJ_a})$.
Using $t=uw+z$, $p=\kappa\bar{s}+\tau$, $\lambda_{{e}u+{g}}=\lambda^{{e}+{g}m}$, $\lambda^{mu}=1$, and \eqref{eq:mu_pj}, equations \eqref{eq:oa-code-pc-ia} can be turned into
\begin{align}
	\sum_{{e}\in {\cJ_a}}\lambda^{euw} \sum_{{g}=0}^{u-1} \Big( & \lambda_{eu+g}^z c_{{e} u+{g},i}
	+
	\sum_{\kappa=1}^{\zitan{\eta}-1} \mu_{\kappa\bar{s},eu+g}^z c_{{e} u+{g},i({e},\kappa\bar{s})} \Big)
	+ \sum_{{e}\in{\cJ}\setminus\cJ_a} \lambda^{{e} uw} \sum_{{g}=0}^{u-1} \lambda_{eu+g}^z c_{{e} u+{g},i}\nonumber\\
	&+
	\sum_{\tau=1}^{\bar{s}-1} \lambda^{(n+\tau-1)uw}\sum_{{e}\in {\cJ_a}}
	\sum_{{g}=0}^{u-1} \sum_{\kappa=0}^{\zitan{\eta}-1} \mu_{\kappa\bar{s}+\tau,eu+g}^z c_{{e} u+{g},i({e},\kappa\bar{s}+\tau)} 
	=\sigma_{i,w}^{(z)}( {\cJ_a}),\label{eq:oa-code-pc-ia-sim}
\end{align} $i\in\cI(\cJ_a);w=0,\ldots,\bar{r}-1;z=0,\ldots,\zitan{\eta}-1$.
Similarly to what we have above for equations \eqref{eq:oa-code-pc-i}, 
\zitan{for each fixed $z\in\{0,\ldots,\eta-1\}$, the set of $\bar{r}$ equations \eqref{eq:oa-code-pc-ia-sim} with $w=0,\ldots,\bar{r}-1$ can be written in matrix form with an $\bar{r}\times \bar{r}$ invertible matrix}. Therefore, for any $ {\cJ_a}\subseteq{\cJ}$ and every $i\in\cI( {\cJ_a})$, the values in
\begin{align}
	&\Big\{ \sum_{{g}=0}^{u-1} \Big( \lambda_{eu+g}^z c_{{e} u+{g},i}
	+
	\sum_{\kappa=1}^{\zitan{\eta}-1} \mu_{\kappa\bar{s},eu+g}^z c_{{e} u+{g},i({e},\kappa\bar{s})} \Big) \mid e\in\cJ_a \Big\},\label{eq:oa-coupled-sum-3}\\
	&\Big\{ \sum_{{g}=0}^{u-1} \lambda_{eu+g}^z c_{{e} u+{g},i} \mid e\in\cJ\setminus\cJ_a \Big\},\label{eq:oa-step-1}\\
	&\Big\{ 
	\sum_{{g}=0}^{u-1} \sum_{\kappa=0}^{\zitan{\eta}-1} \mu_{\kappa\bar{s}+\tau,g}^z c_{{g},i(0,\kappa\bar{s}+\tau)} +\pi_{i,\tau}^{(z)}(\cJ_a)\mid \tau=1,\ldots,\bar{s}-1\Big\}\label{eq:oa-coupled-sum-4}
\end{align}
can be found from the values $\{\sigma_{i,w}^{(z)}( {\cJ_a})\mid w = 0, \ldots, \bar{r}-1 \}$ for all $z=0,\ldots,\zitan{\eta}-1$, where 
\begin{align*}
	\pi_{i,\tau}^{(z)}(\cJ_a):=\sum_{{e}\in {\cJ_a}\setminus\{0\}}\sum_{{g}=0}^{u-1} \sum_{\kappa=0}^{\zitan{\eta}-1} \mu_{\kappa\bar{s}+\tau,eu+g}^z c_{{e} u+{g},i({e},\kappa\bar{s}+\tau)},\quad \tau=1,\ldots,\bar{s}-1.
\end{align*} 

Next, let us show that $\pi_{i,\tau}^{(z)}(\cJ_a)$ can be found by the previously recovered values in \eqref{eq:oa-induction-hypothesis}. 
Indeed, we have
\begin{align}
	\pi_{i,\tau}^{(z)}(\cJ_a) &=\sum_{{e}\in {\cJ_a}\setminus\{0\}}\sum_{{g}=0}^{u-1}\sum_{\kappa=0}^{\zitan{\eta}-1} \lambda^{(\bar{n}+\tau-1+\kappa m)z} c_{{e} u+{g},i({e},\kappa\bar{s}+\tau)}\nonumber\\
	&= \sum_{\kappa=0}^{\zitan{\eta}-1} \lambda^{(\bar{n}+\tau-1+\kappa m)z} \sum_{{e}\in {\cJ_a}\setminus\{0\}}\sum_{{g}=0}^{u-1}c_{{e} u+{g},i({e},\kappa\bar{s}+\tau)},\label{eq:oa-induction-step}
\end{align} where we used $\mu_{\kappa\bar{s}+\tau,eu+g}=\lambda^{\bar{n}+\tau-1+\kappa m}$ for $\tau\neq 0$. For all $i\in\cI(\cJ_a)$, $e\in\cJ_a\setminus\{0\}$, and $\tau \neq 0$, we have $i(e,\kappa\bar{s}+\tau)\in \cI_{a-1}$. By the induction hypothesis \eqref{eq:oa-induction-hypothesis}, we can calculate $\pi_{i,\tau}^{(z)}(\cJ_a)$ using \eqref{eq:oa-induction-step}. 

Now that $\pi_{i,\tau}^{(z)}(\cJ_a)$ can be computed, we can further determine the following values using \eqref{eq:oa-coupled-sum-3}, \eqref{eq:oa-coupled-sum-4}, and the local nodes $\{c_{\zitan{2}},\ldots, c_{u-1}\}$ of the host rack:
\begin{align}
	&\Big\{ \lambda_{0}^z c_{0,i}
	+
	\sum_{\kappa=1}^{\zitan{\eta}-1} \mu_{\kappa\bar{s},0}^z c_{0,i(0,\kappa\bar{s})} \Big\},\label{eq:oa-coupled-3}\\
	&\Big\{
	\sum_{\kappa=0}^{\zitan{\eta}-1} \mu_{\kappa\bar{s}+\tau,0}^z c_{ 0,i(0,\kappa\bar{s}+\tau)} \mid \tau=1,\ldots,\bar{s}-1\Big\}\label{eq:oa-coupled-4}.
\end{align}
Hence, \eqref{eq:oa-ma-local-1} and \eqref{eq:oa-ma-local-2} also hold for $i\in\cI(\cJ_a),\cJ_a\subset\cJ$. It follows that the values in $\{c_{0,i(0,\kappa\bar{s}+\tau)}\mid \kappa=0,\ldots,\zitan{\eta}-1;\tau=0,\ldots,\bar{s}-1 \}$ can be found from \eqref{eq:oa-coupled-3} and \eqref{eq:oa-coupled-4} for every $i\in\cI_a$. 

\zitan{It remains to show that \eqref{eq:oa-induction-hypothesis} holds for $i\in\cI_a$.
Note that}, for $i\in\cI(\cJ_a)$, $e\in\cJ_a\setminus\{0\}$, and $\kappa\bar{s} \neq 0$, we have $i(e,\kappa\bar{s})\in \cI_{a-1}$. Therefore, $\{\zitan{\sum_{{g}=0}^{u-1}\lambda_{eu+g}^z}c_{eu+g,i(e,\kappa\bar{s})} \mid \kappa=1,\ldots,\zitan{\eta}-1\}$ has been previously recovered for each $e\in\cJ_a\setminus\{0\}$. \zitan{From this set of values, we can calculate 
\begin{align*}
	\sum_{g=0}^{u-1}\sum_{\kappa=1}^{\eta-1} \mu_{\kappa\bar{s},eu+g}^z c_{{e} u+{g},i({e},\kappa\bar{s})} 
	&=\sum_{g=0}^{u-1}\sum_{\kappa=1}^{\eta-1} \lambda_{eu+g}^{z}\lambda^{\kappa mz} c_{{e} u+{g},i({e},\kappa\bar{s})} \\
	&=\sum_{\kappa=1}^{\eta-1} \lambda^{\kappa mz} \sum_{g=0}^{u-1}\lambda_{eu+g}^{z}c_{{e} u+{g},i({e},\kappa\bar{s})} 
\end{align*}
where we used $\mu_{\kappa\bar{s},eu+g}=\lambda^{e+(g+\kappa)m}$ and $\lambda_{eu+g}=\lambda^{e+gm}$.
Therefore, for each $e\in\cJ_{a}\setminus\{0\}$, we can further obtain from \eqref{eq:oa-coupled-sum-3} the values in
\begin{align*}
	\Big\{ \sum_{{g}=0}^{u-1} \lambda_{eu+g}^z c_{{e} u+{g},i} \mid e\in\cJ_a\setminus\{0\} \Big\}.
\end{align*}
Combing the above set of values and \eqref{eq:oa-step-1}, we obtain that \eqref{eq:oa-induction-hypothesis} also holds for $i\in\cI_a$.}
This completes the proof of the induction step. Hence, the values in $\{c_{0,i} \mid i=0,\ldots,l-1\}$ can be recovered from the information obtained from the
helper racks in $\cR.$

Now let us assess the repair bandwidth and access of this repair scheme. It is clear that we need the values of $\sigma_{i,w}^{(z)}( {\cJ_a})$ from the helper racks $\cR$ for the repair of $c_0$. Recall that for all $a=1,\ldots,\bar{n}-\bar{d}$ we have 
\begin{align}
	\sigma_{i,w}^{(z)}( {\cJ_a}) &= - 
	\sum_{{e}\in\cR}\sum_{g=0}^{u-1}\Big(
	\lambda_{{e}u+{g}}^t c_{{e}u+{g},i}
	+
	\delta(i_e)
	\sum_{p=1}^{\zitan{\theta}-1}\mu_{p,eu+g}^t c_{{e}u+{g},i({e},p)}\Big),\label{eq:oa-access}
\end{align}
where $t=uw+z$. 
Therefore, to compute $\{\sigma_{i,w}^{(z)}( {\cJ_a})\mid a=1,\ldots,\bar{n}-\bar{d}\}$ for any $w$ and $z$, we need to access the symbols $\{c_{{e} u+{g},i}\mid 0\le {g}<u,i\in\cI\}$ for each  {${e}\in{\cR}$}. In other words,
we need to access $\zitan{\theta}^{\bar{n}-1}=\frac{l}{\bar{s}(u-v)}$ symbols on each node in the helper racks; thus, the total number of accessed symbols equals
$$\frac{\bar{d}ul}{\bar{s}(u-v)}.$$ This is the smallest possible number according to the bound \eqref{eq:rack-access}, and thus the code supports optimal access. Moreover, the set of symbols we access in each helper rack depends on the index of the host rack but not the index of the helper rack.

To see the repair bandwidth of the scheme, we apply $p=\kappa\bar{s}+\tau$, $\lambda_{{e}u+{g}}=\lambda^{{e}+{g}m}$, $\lambda^{mu}=1$, and \eqref{eq:mu_pj} to \eqref{eq:oa-access} and obtain
\begin{align}
	\sigma_{i,w}^{(z)}( {\cJ_a}) = -\sum_{{e}\in\cR} 
	\Big(
	\lambda^{euw} 
	\Big(
	\sum_{{g}=0}^{u-1}
	&\lambda^{(e+gm)z}  c_{{e}u+{g},i}
	+
	\delta(i_e)
	\sum_{\kappa=1}^{\zitan{\eta}-1}
	\lambda^{\kappa mz}
	\sum_{{g}=0}^{u-1}
	\lambda^{(e+gm)z} c_{{e}u+{g},i({e},\kappa\bar{s})}\Big)\nonumber\\
	&+	
	\delta(i_e)
	\sum_{\tau=1}^{\bar{s}-1}
	\lambda^{\bar{n}+\tau-1}
	\sum_{\kappa=0}^{\zitan{\eta}-1}
	\lambda^{(\bar{n}+\tau-1+\kappa m)z} 
	\sum_{{g}=0}^{u-1} 
	c_{{e}u+{g},i({e},\kappa\bar{s}+\tau)}
	\Big).\nonumber
\end{align}
Therefore, to compute $\sigma_{i,w}^{(z)}( {\cJ_a})$, it suffices to download from each $e\in\cR$ the symbols 
\begin{align*}
	\Big\{ \sum_{{g}=0}^{u-1} \lambda^{(e+gm)z} c_{{e}u+{g},i} \mid z=0,\ldots,\zitan{\eta}-1;i\in\cI\Big\}.
\end{align*}
That is, the symbols downloaded to the host rack $e_1$ from any helper rack $e\in\cR$ are $\zitan{\eta}=u-v$ distinct linear combinations of the values in $\{c_{e u+g,i}\mid g=0,\ldots,u-1\},i\in\cI$. Thus, the total amount of information downloaded for the purposes of repair equals
$$
(u-v)\bar{d}|\cI|=(u-v)\bar{d} \frac{l}{\bar{s}(u-v)}=\frac{\bar{d}l}{\bar{d}-\bar{k}+1}.
$$
This is the smallest possible number according to the bound \eqref{eq:rack-bound}, and thus the code support optimal repair.
\vspace{1ex}

\emph{2) The MDS property:}
It suffices to show that the contents of any $n-r$ nodes suffice to find the values of the remaining $r$ nodes.

Let $\cK=\{j_1,\ldots,j_r\}\subseteq\{0,\ldots,n-1\}$ be the set of $r$ nodes to be recovered from the set of $n-r$ nodes in $\cK^c:=\{0,\ldots,n-1\}\setminus\cK$.
Let us write $j_b={e}_b u+{g}_b$ where $0\le {g}_b< u-1$ for $b=1,\ldots,r$.

Let $ {\cJ}$ be the set of distinct ${e}_b,b=1\ldots,r$. For $1 \le a \le |{\cJ}|$, let $ {\cJ_a}\subseteq{\cJ}$ be such that $| {\cJ_a}| = a$.
Let $ {\cI_0}=\{ i
\in {\{0,\ldots,l-1\}} \mid i_{{e}} \neq 0, {e}\in{\cJ} \}$.
For $1 \le a \le |{\cJ}|$ and $ {\cJ_a}\subseteq{\cJ}$, let $$\cI( {\cJ_a})=\{ i
\in {\{0,\ldots,l-1\}} \mid i_{e}=0,{e}\in {\cJ_a}; i_{{e}'} \neq 0, {e}'\in{\cJ}\setminus {\cJ_a} \}.$$ Let $\cI(a)=\bigcup_{ {\cJ_a}\subseteq{\cJ}}\cI( {\cJ_a})$ where $1\le a \le |{\cJ}|$. Observe that  {the sets $
	\cI_a,0\leq a\leq |\cJ|$ partition the set $\{0,1,\dots,l-1\}$}.

We will prove by induction that we can recover the nodes in $\cK$ from the nodes in $\cK^c$.
First, let us establish the induction basis, i.e., we can recover the values $\{c_{j,i}\mid j\in\cK \}$ for every $i\in {\cI_0}$ from the nodes $\{c_j \mid j\in {\cK^{c}} \}$. 

From \eqref{eq:oa-code-pc}, for $i\in {\cI_0}$, we have
\begin{align}
	\sum_{j\in\cK}\lambda_{j}^t c_{j,i}
	&= -
	\sum_{ {j\in\cK^{c}}}
	 {\Big(
	\lambda_{j}^t c_{j,i}
	+
	\delta(i_e) 	
	\sum_{p=1}^{\zitan{\theta}-1}\mu_{p,j}^t c_{j,i({e},p)}
	\Big)}.
\label{eq:oa-mds}
\end{align}
To simplify notation, denote the right-hand side of \eqref{eq:oa-mds} by $ -\sigma_{i,t}=-\sigma_{i,t}(\emptyset).$
Note that the value of ${\sigma_{i,t}}$ only depends on the nodes $\{c_j\mid  {j\in\cK^{c}}\}$.  {Writing \eqref{eq:oa-mds} in matrix form}, we have
\begin{align*}
	\begin{bmatrix}
	1 & \cdots  & 1 \\
	\lambda_{j_1} & \cdots  & \lambda_{j_{r}} \\
	\vdots & \ddots & \vdots \\
	\lambda_{j_1}^{r-1} & \cdots  & \lambda_{j_{r}}^{r-1}
	\end{bmatrix}
	\begin{bmatrix}
	c_{j_1,i} \\
	c_{j_2,i} \\
	\vdots \\
	c_{j_r,i}
	\end{bmatrix} =
	-\begin{bmatrix}
	 {\sigma_{i,0}} \\
	 {\sigma_{i,1}} \\
	\vdots  \\
	 {\sigma_{i,r-1}}
	\end{bmatrix}.
\end{align*}
Therefore, the values $\{c_{j,i} \mid j\in\cK \}$ can be calculated from the values $\{  {\sigma_{i,t}}\mid t = 0, \ldots, r-1\}$ for every $i\in {\cI_0}$.

Now let us establish the induction step. Suppose we recover the values $\{c_{j,i}\mid j\in\cK \}$ for every $i\in {\cI_{a'}}, 0\le a' \le a-1$ from the nodes $\{c_j\mid  {j\in\cK^{c}} \}$, where $1\le a\le |{\cJ}|$.

{Now let us fix a set $\cJ_{a}\subseteq\cJ$ and let} $i\in \cI( {\cJ_a})$.
From \eqref{eq:oa-code-pc}, we have
\begin{align}
	\sum_{j\in\cK}\lambda_{j}^t c_{j,i}
	& =
	-
	\sum_{p=1}^{\zitan{\theta}-1}
	\sum_{j\in\cK \colon {e}\in {\cJ_a}}
	\mu_{p,j}^t
	c_{j,i({e},p)} -
	\sum_{ {j\in\cK^{c}}}
	\Big(
	\lambda_{j}^t c_{j,i}
	+
	\delta(i_e)
	\sum_{p=1}^{\zitan{\theta}-1}\mu_{p,j}^t c_{j,i({e},p)}
	\Big)\nonumber\\
	& {=:-\rho_{i,t}- \sigma_{i,t}( {\cJ_{a}}),}
	\label{eq:oa-mds-iua}
\end{align}
where the last line serves to introduce the shorthand notation.
Note that we know the values $\{ {\sigma_{i,t}( {\cJ_a})}\mid t=0,\ldots,r-1 \}$ since the value $ {\sigma_{i,t}( {\cJ_a})}$ only depends on the nodes $\{c_j\mid  {j\in\cK^{c}}\}$.
Furthermore, we also know the values $\{ {\rho_{i,t}}\mid t=0,\ldots,r-1 \}$. Indeed, for $i\in\cI( {\cJ_a})$, ${e} \in  {\cJ_a}$, and $p \neq 0$, we have $i({e},p) \in  {\cI_{a-1}}$. By the induction hypothesis, we have recovered the values $\{c_{j,i} \mid i\in {\cI_{a-1}}, j\in\cK \}$, and therefore, we know the values $\{c_{j,i({e},p)} \mid j\in\cK\colon {e}\in {\cJ_a};p\neq 0 \}$ for each $i\in {\cI_{a}}$. It follows that we know the values $\{ {\rho_{i,t}}\mid t=0,\ldots,r-1 \}$.
Writing \eqref{eq:oa-mds-iua} in matrix form, we have
\begin{align*}
	\begin{bmatrix}
	1 & \cdots  & 1 \\
	\lambda_{j_1} & \cdots  & \lambda_{j_{r}} \\
	\vdots & \ddots & \vdots \\
	\lambda_{j_1}^{r-1} & \cdots  & \lambda_{j_{r}}^{r-1}
	\end{bmatrix}
	\begin{bmatrix}
	c_{j_1,i} \\
	c_{j_2,i} \\
	\vdots \\
	c_{j_r,i}
	\end{bmatrix} =
	-\begin{bmatrix}
	 {\rho_{i,0}}+ {\sigma_{i,0}(\cJ_a)} \\
	 {\rho_{i,1}}+ {\sigma_{i,1}(\cJ_a)} \\
	\vdots  \\
	 {\rho_{i,r-1}}+ {\sigma_{i,r-1}(\cJ_a)}
	\end{bmatrix}.
\end{align*}
Therefore, the values $\{c_{j,i} \mid j\in\cK \}$ can be recovered for every $i\in\cI( {\cJ_a})$ and $ {\cJ_a}\subseteq{\cJ}$. It follows that we can recover the values $\{c_{j,i}\mid j\in\cK \}$ for every $i\in {\cI_{a}}$. Thus, all the values $\{c_{j,i} \mid j\in\cK,i\in {\cI_{a}},0\le a\le |{\cJ}|\}=\{c_{j,i}\mid j\in\cK,i\in\{0,\ldots,l-1\}\}$ can be recovered from the nodes $\{c_j\mid  {j\in\cK^{c}}\}$.

Since $\cK$ is arbitrary, we conclude that any $n-r$ nodes can recover the entire codeword, i.e., the code is MDS.
\end{proof}

As a final remark of this section, we note that the idea of utilizing the same set of symbols to generate distinct linear combinations can also be applied to the low-access construction in \cite{chen2019explicit} directly, resulting in a repair scheme that supports optimal repair of up to $u-v$ failed nodes in a single rack by accessing a relatively small number of symbols in helper racks. More precisely, it is possible to show that the rack-aware low-access codes in \cite{chen2019explicit} admit a repair scheme for recovering $h\leq u-v$ failed nodes in a single rack from any $\bar{d}$ helper racks with bandwidth attaining \eqref{eq:rack-bound-h} and the number of symbols accessed for repair is at most $\bar{d}ul/\bar{s}$. An inspection of the bound \eqref{eq:rack-access} reveals that the number of accessed symbols for this scheme is optimal if $h=u-v$. We leave the proof of this statement to interested readers. In fact, we will demonstrate this idea again in Section~\ref{sec:rs} for a different class of codes.

\section{The RS code construction}\label{sec:rs}

In this section, we present a family of scalar MDS codes that support optimal bandwidth and low access for repairing $h\leq u-v$ nodes in a single host rack from any $\bar{d}$ helper racks. Our construction is an extension of the RS codes constructed in \cite{tamo2018repair}, \cite{chen2019explicit}, \cite{chen2020enabling}. As mentioned in the final remark of the previous section, one of the key ingredients of our method is to take advantage of multiple distinct linear combinations on the same set of symbols in helper racks to provide needed information for the repair of failed nodes. The other ingredient essential to the repair scheme of our code construction is a proper basis for representing a symbol in a finite field as a vector over a subfield, which affects the access complexity of the scheme. To construct such a basis, we rely on techniques similar to those in \cite{chen2020enabling}.

We still use the basic notation in the previous sections except that one-based numbering is adopted here to label the node indices and rack indices.
Let $q$ be a power of a prime such that $u\mid (q-1)$ and let $\ff_q$ be the finite field of order $q$. Let $p_1,\dots,p_{\bar n}$ be distinct primes
such that $p_e\equiv 1 \bmod \bar{s}$ for $e=1,\dots,\bar n$; for instance, we can take the {\em smallest} $\bar n$
primes with these properties. For $e=1,\dots,\bar n$ let $\alpha_e$ be a primitive element of $\ff_{q^{p_e}}$. It follows that $\alpha_e$ is an element of degree $p_e$ over ${\mathbb F}_q$. 
Consider the following sequence of algebraic extensions of $\ff_q$: let $K_0=\ff_q$ and for $e=1,\ldots,\bar{n}$ let
\begin{align*}
	{F_e}={K_{e-1}}(\alpha_e),\ K_e=F_e(\beta_e),
\end{align*}
where $\beta_e$ is an element of degree $\bar{s}$ over $F_e$. In the end we obtain the field 
\begin{align*}
	{\mathbb K} := K_{\bar{n}} = \ff_q(\alpha_1,\ldots,\alpha_{\bar{n}},\beta_1,\ldots,\beta_{\bar{n}}).
\end{align*}

The RS code will be constructed over the field $\mathbb{K}$. The evaluation points of the code are chosen from certain cosets of the multiplicative subgroup of order $u$ in $\mathbb{K}^*$. Specifically, let $\lambda\in{\mathbb F}_q$ be an element of multiplicative order $u.$ Define the elements
\begin{align}
	\lambda_{e,g}=\lambda^{g-1}\alpha_{e},\ e=1,\dots,\bar n; g=1,\dots,u.\label{eq:alpha-eg}
\end{align} 
The set of evaluation points $\Omega$ is given by
$$
\Omega=\bigcup_{e=1}^{\bar n} \Omega_e, \text{ where } \Omega_e=\{\lambda_{e,g}\mid g=1,\dots,u\}.
$$
Consider the RS code $\cC=RS_{\mathbb K}(n,k,\Omega):=\{(f(\lambda_{e,g}))_{1\leq e\leq \bar{n};1\leq g\leq u}\mid f\in\mathbb
K[x], \deg f < k\}$. A codeword of $\cC$ has the form $c=(c_1,c_2,\dots,c_n),$ where the coordinate $c_{(e-1)u+g}$ corresponds to the evaluation point $\lambda_{e,g},1\le e\le \bar n, 1\le g\le u$. For notational convenience, we also denote $c=((c_{e,g})_{1\leq e\leq \bar{n};1\leq g\leq u})$.

Before describing the repair scheme for the code $\cC$, let us first establish a few properties of the field tower constructed above. 

\begin{lemma}
	\label{le:ext-degree}
	The extension degrees in the field tower $\ff_q=K_0\subset K_1\subset \ldots \subset K_{\bar{n}}=\mathbb{K}$ are as follows:
	\begin{align*}
		[K_e: \ff_q] = \bar{s}^e\prod_{i=1}^{e}p_i,\ e=1,\ldots,\bar{n}.
	\end{align*} In particular, $l:=[\mathbb{K}: \ff_q] = \bar{s}^{\bar{n}}\prod_{e=1}^{\bar{n}}p_e$.
\end{lemma}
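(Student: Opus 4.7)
The plan is to proceed by induction on $e$, using the tower law and the arithmetic constraint $p_e \equiv 1 \pmod{\bar{s}}$ to control $[F_e:K_{e-1}]$. By the tower law,
\[
[K_e:\ff_q] \;=\; [K_e:F_e]\,[F_e:K_{e-1}]\,[K_{e-1}:\ff_q],
\]
and $[K_e:F_e]=\bar{s}$ holds by the very definition of $\beta_e$. So the whole argument reduces to computing $[F_e:K_{e-1}]$, and this is where the work lies.

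The base case $e=1$ is immediate: $[F_1:\ff_q]=p_1$ because $\alpha_1$ has degree $p_1$ over $\ff_q$, and then $[K_1:\ff_q]=\bar{s}\,p_1$. For the inductive step, assume $[K_{e-1}:\ff_q]=\bar{s}^{e-1}\prod_{i=1}^{e-1}p_i$. Since $\alpha_e$ has minimal polynomial of prime degree $p_e$ over $\ff_q$, the degree $[K_{e-1}(\alpha_e):K_{e-1}]=[F_e:K_{e-1}]$ must divide $p_e$, hence equals either $1$ or $p_e$.

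The key step is ruling out the trivial case $[F_e:K_{e-1}]=1$. If that held, then $\alpha_e\in K_{e-1}$ would force $p_e=[\ff_q(\alpha_e):\ff_q]$ to divide $[K_{e-1}:\ff_q]=\bar{s}^{e-1}\prod_{i=1}^{e-1}p_i$. Since $p_e$ is a prime distinct from $p_1,\dots,p_{e-1}$, this would require $p_e\mid \bar{s}^{e-1}$, i.e.\ $p_e\mid \bar{s}$. But $p_e\equiv 1\pmod{\bar{s}}$ implies $p_e\geq \bar{s}+1>\bar{s}$ whenever $\bar{s}\geq 1$ (with the case $\bar{s}=1$ handled directly, since then $p_e$ would have to coincide with one of $p_1,\dots,p_{e-1}$). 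This contradiction gives $[F_e:K_{e-1}]=p_e$, and multiplying through yields $[K_e:\ff_q]=\bar{s}^e\prod_{i=1}^e p_i$ as required.

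The main obstacle is precisely this divisibility argument: one must use both conditions imposed on the $p_i$ (pairwise distinctness and $p_e\equiv 1\pmod{\bar{s}}$) to guarantee that $\alpha_e$ remains of full degree $p_e$ over $K_{e-1}$ despite the intermediate adjunctions of the algebraic elements $\beta_1,\dots,\beta_{e-1}$ of degree $\bar{s}$. The final claim $l=[\mathbb K:\ff_q]=\bar{s}^{\bar{n}}\prod_{e=1}^{\bar{n}}p_e$ is then just the $e=\bar{n}$ instance of the formula.
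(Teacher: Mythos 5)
Your argument is the same one the paper compresses into the phrase ``by co-primality'': tower law, $[K_e:F_e]=\bar{s}$ by the definition of $\beta_e$, and coprimality of $p_e$ with $[K_{e-1}:\ff_q]$ to keep $\alpha_e$ at full degree, followed by induction on $e$. The one step whose justification does not stand as written is your dichotomy: you assert that $[F_e:K_{e-1}]$ must \emph{divide} $p_e$ because the minimal polynomial of $\alpha_e$ over $\ff_q$ has prime degree. In general the degree of an element over a larger base field need not divide its degree over the smaller one (e.g., $\sqrt[3]{2}$ has degree $3$ over $\mathbb{Q}$ but degree $2$ over $\mathbb{Q}(\omega\sqrt[3]{2})$, $\omega$ a primitive cube root of unity); what comes for free is only $[K_{e-1}(\alpha_e):K_{e-1}]\le p_e$, so ruling out the value $1$ does not by itself force the value $p_e$. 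The divisibility is true in the present setting, but only because every field in the tower is finite: writing $K_{e-1}=\ff_{q^N}$ one has $K_{e-1}(\alpha_e)=\ff_{q^{\lcm(N,p_e)}}$, hence $[F_e:K_{e-1}]=p_e/\gcd(N,p_e)$, a divisor of $p_e$. Simpler still, you can bypass the dichotomy by running your own coprimality computation in the other direction: $p_e=[\ff_q(\alpha_e):\ff_q]$ divides $[F_e:\ff_q]=[F_e:K_{e-1}]\,[K_{e-1}:\ff_q]$, and since the distinctness of the $p_i$ together with $p_e\equiv 1 \bmod \bar{s}$ (so $p_e\nmid\bar{s}$) gives $\gcd\bigl(p_e,[K_{e-1}:\ff_q]\bigr)=1$, it follows that $p_e\mid [F_e:K_{e-1}]\le p_e$, i.e., $[F_e:K_{e-1}]=p_e$. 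With that repair the base case, the induction step, and the final evaluation at $e=\bar{n}$ are all correct and coincide with the paper's proof.
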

\begin{proof}
	It is clear that the extension degree $[K_e: F_e]=\bar{s}$.
	The field $F_{e}$ is obtained by adjoining $\alpha_e$ to $K_{e-1}$, whose degree over $K_{e-1}$ is $p_e$ by co-primality. Thus, $[K_e: K_{e-1}]=\bar{s} p_e$. It follows that $[K_e: \ff_q] = \bar{s}^e\prod_{i=1}^{e}p_i$ for all $e=1,\ldots,\bar{n}$.
\end{proof}

In the light of Lemma~\ref{le:ext-degree}, we represent $i\in\{0,\ldots,l-1\}$ as the following $2\bar{n}$-tuple:
\begin{align*}
	i=(i_{\bar{n}},i_{\bar{n}-1},\ldots,i_{1},i'_{\bar{n}},i'_{\bar{n}-1},\ldots,i'_{1}),
\end{align*} where $i_e\in\{0,\ldots,p_e-1\}$ and $i'_e\in\{0,\ldots,\bar{s}-1\}$ for $e=1,\ldots,\bar{n}$.

Next, we need a simple proposition.

\begin{proposition}\label{prop:basis}
	For $e=1,\ldots,\bar{n}$, the set $\{1,\alpha_e^{u},\ldots, \alpha_e^{u(p_e-1)}\}$ is a basis for $F_e$ over $K_{e-1}$.
\end{proposition}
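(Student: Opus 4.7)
The plan is to exploit the primality of $p_e$ together with the tower law. By Lemma~\ref{le:ext-degree}, $[F_e:K_{e-1}]=p_e$, so the ordinary power basis $\{1,\alpha_e,\ldots,\alpha_e^{p_e-1}\}$ spans $F_e$ over $K_{e-1}$. Because $p_e$ is prime, there are no intermediate fields strictly between $K_{e-1}$ and $F_e$; consequently every $\gamma\in F_e\setminus K_{e-1}$ satisfies $K_{e-1}(\gamma)=F_e$, so its $p_e$ consecutive powers $1,\gamma,\ldots,\gamma^{p_e-1}$ are automatically $K_{e-1}$-linearly independent. Applied to $\gamma=\alpha_e^u$, the proposition reduces to the single assertion
\[
\alpha_e^u \notin K_{e-1}.
\]

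To verify this, I would identify $K_{e-1}$ as a finite field. Since $K_{e-1}$ is a finite extension of $\ff_q$ of degree $M := \bar{s}^{e-1}\prod_{i=1}^{e-1}p_i$ (Lemma~\ref{le:ext-degree}), we have $K_{e-1}=\ff_{q^{M}}$, and $\alpha_e\in\ff_{q^{p_e}}$ by construction. Intersections of subfields of $\overline{\ff_q}$ are governed by $\gcd$s of degrees:
\[
K_{e-1}\cap \ff_{q^{p_e}}\;=\;\ff_{q^{\gcd(M,p_e)}}.
\]
I would next check $\gcd(M,p_e)=1$: the primes $p_1,\ldots,p_{e-1}$ are distinct from $p_e$ by choice, and the condition $p_e\equiv 1\pmod{\bar{s}}$ forces $p_e>\bar{s}$ when $\bar{s}\ge 2$ (and the claim is vacuous when $\bar{s}=1$ since then $\bar{s}^{e-1}=1$), so no prime divisor of $\bar{s}$ can equal $p_e$. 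Hence $K_{e-1}\cap \ff_{q^{p_e}}=\ff_q$, and the problem collapses to showing $\alpha_e^u\notin \ff_q$.

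For the final step I would work with orders in the cyclic group $\ff_{q^{p_e}}^{*}$. Since $\alpha_e$ is primitive, its order is $q^{p_e}-1$, and as $u\mid q-1\mid q^{p_e}-1$, the order of $\alpha_e^u$ is exactly $(q^{p_e}-1)/u$. If $\alpha_e^u$ were to lie in $\ff_q$, its order would have to divide $q-1$, forcing
\[
q^{p_e}-1 \;\le\; u(q-1) \;\le\; (q-1)^2.
\]
But $q^{p_e}-1\ge q^2-1=(q-1)(q+1)>(q-1)^2$, a contradiction. This furnishes $\alpha_e^u\notin K_{e-1}$, which by the reduction above completes the proof.

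The main obstacle is the bookkeeping that $\gcd(M,p_e)=1$, which depends on the careful choice of the $p_e$'s as distinct primes satisfying $p_e\equiv 1\pmod{\bar{s}}$; the $\bar{s}=1$ case should be treated separately to avoid a spurious issue. Everything else is a routine application of the tower law and elementary finite-field order arithmetic.
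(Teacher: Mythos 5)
Your proof is correct and follows essentially the same route as the paper's: the paper likewise argues that $\alpha_e^u$ has degree $p_e$ over $\ff_q$ (via primitivity of $\alpha_e$ and $u\mid q-1$, i.e.\ your order computation) and then invokes coprimality of $[K_{e-1}:\ff_q]$ with $p_e$ to conclude the basis property over $K_{e-1}$. You merely reorganize the two ingredients (coprimality first, then non-membership via the subfield-intersection formula and the prime-degree/no-intermediate-field observation) and spell out details the paper compresses into two lines.
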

\begin{proof}
	Since $u<q$ and $\alpha_e$ is a primitive element in $\ff_{q^{p_e}}$, we have $\deg_{\ff_q} \alpha_{e}^u = \deg_{\ff_q} \alpha_{e} = p_e$. Therefore, the set 
	$\{1,\alpha_e^{u},\ldots, \alpha_e^{u(p_e-1)}\}$ is a basis for $\ff_{q^{p_e}}$ over $\ff_q$. Furthermore, by co-primality,	$\{1,\alpha_e^{u},\ldots, \alpha_e^{u(p_e-1)}\}$ is also a basis for $F_e$ over $K_{e-1}$.
\end{proof}

The construction of the field tower and Proposition~\ref{prop:basis} lead to a natural choice of a basis for $\mathbb{K}$ over $\ff_q$.

\begin{lemma}
	Let 
	\[
	A=\{a_i:=\prod_{x=1}^{\bar{n}}\alpha_x^{ui_x} \prod_{y=1}^{\bar{n}}\beta_{y}^{i'_{y}}  \mid i=0,\ldots,l-1\}.
	\] Then $A$ is basis for $\mathbb{K}$ over $\ff_q$.
\end{lemma}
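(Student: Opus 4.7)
The plan is to prove this by induction along the field tower, using Proposition~\ref{prop:basis} and the tower law at each step, and then counting to confirm the cardinality matches $l$.

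First, I would observe that $|A|=\bar s^{\bar n}\prod_{e=1}^{\bar n}p_e = l = [\mathbb{K}:\ff_q]$ by Lemma~\ref{le:ext-degree}. So it suffices to prove that the elements of $A$ are linearly independent over $\ff_q$, or equivalently that they span $\mathbb{K}$.

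The key step is to exhibit, for each $e=1,\dots,\bar n$, a basis of $K_e$ over $K_{e-1}$ of the form $\{\alpha_e^{u i_e}\beta_e^{i'_e} : 0\le i_e<p_e,\,0\le i'_e<\bar s\}$. Proposition~\ref{prop:basis} gives us $\{1,\alpha_e^u,\ldots,\alpha_e^{u(p_e-1)}\}$ as a basis of $F_e$ over $K_{e-1}$. Since $\beta_e$ has degree $\bar s$ over $F_e$, the set $\{1,\beta_e,\ldots,\beta_e^{\bar s-1}\}$ is a basis of $K_e$ over $F_e$. Multiplying these two bases via the tower law $[K_e:K_{e-1}]=[K_e:F_e]\cdot[F_e:K_{e-1}]=\bar s p_e$ gives the desired basis of $K_e$ over $K_{e-1}$.

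I would then prove by induction on $e$ that
\[
A_e:=\Bigl\{\prod_{x=1}^{e}\alpha_x^{ui_x}\prod_{y=1}^{e}\beta_y^{i'_y}\;\Big|\;0\le i_x<p_x,\,0\le i'_y<\bar s\Bigr\}
\]
is a basis of $K_e$ over $\ff_q$. The base case $e=0$ is trivial ($A_0=\{1\}$ is a basis of $\ff_q$ over itself). For the inductive step, assuming $A_{e-1}$ is a basis of $K_{e-1}$ over $\ff_q$, the tower law combined with the $K_{e-1}$-basis of $K_e$ exhibited above shows that $A_e$ (which is precisely the product set of these two bases) is a basis of $K_e$ over $\ff_q$. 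Taking $e=\bar n$ yields $A_{\bar n}=A$ as a basis of $\mathbb{K}$ over $\ff_q$.

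No serious obstacle is expected: the argument is a straightforward repeated application of the tower law, with Proposition~\ref{prop:basis} handling the only nontrivial input, namely that the powers of $\alpha_e$ appearing in $A$ are scaled by $u$ yet still form a basis. The only subtle point, worth mentioning explicitly, is that rearranging the product into its ``factored'' form $\prod_{x}\alpha_x^{ui_x}\prod_{y}\beta_y^{i'_y}$ is permissible because all the elements commute in a field, so the basis of $K_e$ over $\ff_q$ constructed inductively as an iterated product indeed matches the description of $A$.
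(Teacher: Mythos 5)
Your proof is correct and follows essentially the same route as the paper: the paper likewise combines Proposition~\ref{prop:basis} (powers of $\alpha_e^u$ as a basis of $F_e$ over $K_{e-1}$) with the fact that $\{1,\beta_e,\ldots,\beta_e^{\bar s-1}\}$ is a basis of $K_e$ over $F_e$, and iterates the tower law to conclude that the products $a_i$ form a basis of $\mathbb{K}$ over $\ff_q$. You simply spell out the induction along the tower in more detail than the paper's terse argument.
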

\begin{proof}
	Since $\deg_{F_{e}}\beta_e =\bar{s}$, the set $\{1,\beta_e,\ldots,\beta_e^{\bar{s}-1}\}$ is a basis for $K_e$ over $F_e$ for $e=1,\ldots,\bar{n}$. On account of Proposition~\ref{prop:basis}, it follows that the elements $a_i,i=0,\ldots,l-1$ form a basis for $\mathbb{K}$ over $\ff_q$.
\end{proof}

The next lemma is the counterpart of Lemma~1 in \cite{tamo2018repair}, properly modified for the rack model. Essentially, it shows for $e=1,\ldots,\bar{n}$, the field $K_e$ can be viewed as a direct sum of $\bar{s}$ vector spaces over $K_{e-1}$ of dimension $p_e$. Furthermore, these $\bar{s}$ vector spaces can be generated by a single vector space via repeatedly multiplying the vector space by $\alpha_e^u$. This structure of $K_e$ will play an important part in optimizing the repair bandwidth of the code $\cC$.

\begin{lemma}
	\label{le:rs-subspaces}
	For $e\in \{1,\ldots,\bar{n}\}$, let 
	\begin{align*}
		E_e =\{
		\beta_e^{j}\alpha_{e}^{u(j+t\bar{s})} \mid j=0,\ldots,\bar{s}-1;t=0,\ldots,(p_e-1)/\bar{s}-1
		\}\cup
		\Big\{
		\alpha_e^{u(p_e-1)}\sum_{j=0}^{\bar{s}-1}\beta_e^{j}
		\Big\}
	\end{align*}
	and define $S_{e}=\Span_{K_{e-1}} E_e$. Then 
	\begin{gather}
		\dim_{K_{e-1}}S_{e}=p_e,\quad  
		S_{e}+S_{e}\alpha_e^{u}+\dots+S_{e}\alpha_e^{u(\bar{s}-1)}=K_e,\label{eq:S}
	\end{gather}
	where $S_e\alpha =\{\xi \alpha \mid \xi\in S_e\}$ and the sum is the Minkowski sum of sets.
\end{lemma}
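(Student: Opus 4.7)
The two claims are linked by a dimension count. Since $|E_e|=\bar{s}\cdot (p_e-1)/\bar{s}+1=p_e$ (using $p_e\equiv 1\bmod \bar{s}$), we have $\dim_{K_{e-1}}S_e\le p_e$; and by Lemma~\ref{le:ext-degree}, $\dim_{K_{e-1}}K_e=\bar{s}p_e$. Hence the Minkowski sum $T_e:=\sum_{i=0}^{\bar{s}-1}S_e\alpha_e^{ui}$ satisfies $\dim_{K_{e-1}}T_e\le \bar{s}\dim_{K_{e-1}}S_e\le \bar{s}p_e$, and both desired conclusions $\dim S_e=p_e$ and $T_e=K_e$ will follow at once once we prove that the $\bar{s}p_e$ natural generators of $T_e$—namely the elements $\beta_e^j\alpha_e^{u(j+t\bar{s}+i)}$ indexed by $j,i\in\{0,\ldots,\bar{s}-1\}$ and $t\in\{0,\ldots,(p_e-1)/\bar{s}-1\}$, together with the $\bar{s}$ elements $\alpha_e^{u(p_e-1+i)}\sum_{j=0}^{\bar{s}-1}\beta_e^j$ for $i\in\{0,\ldots,\bar{s}-1\}$—are $K_{e-1}$-linearly independent.

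To establish this independence, my plan is to expand every generator in the $F_e$-basis $\{1,\beta_e,\ldots,\beta_e^{\bar{s}-1}\}$ of $K_e$. A general $K_{e-1}$-linear combination then decomposes uniquely as $\sum_{j=0}^{\bar{s}-1}\beta_e^j(v_j+u)$, where $v_j$ collects the $(a)$-type contributions appearing in the $j$-th $F_e$-component and thus lies in the hyperplane $V_j:=\alpha_e^{uj}\cdot\Span_{K_{e-1}}\{1,\alpha_e^u,\ldots,\alpha_e^{u(p_e-2)}\}\subset F_e$, while $u$ collects the $(b)$-type contributions and lies in $U:=\Span_{K_{e-1}}\{\alpha_e^{u(p_e-1+i)}:0\le i\le \bar{s}-1\}$—crucially, the same $u$ appears in every component because each $(b)$-type generator affects all positions equally. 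Setting the combination to zero and using the $F_e$-linear independence of $\{\beta_e^j\}$, we deduce $v_j+u=0$ for every $j$, so $u\in U\cap\bigcap_{j=0}^{\bar{s}-1}V_j$.

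The crux of the argument is to show $U\cap\bigcap_j V_j=\{0\}$; once this is established, $u=0$ forces every $v_j=0$, and the desired independence follows because the $(a)$-type generators at a fixed $j$ correspond to $p_e-1$ consecutive powers of $\alpha_e^u$ (hence independent in $F_e$) and the $(b)$-type generators correspond to $\bar{s}<p_e$ consecutive powers, which are independent as well. To prove the intersection is trivial, identify $F_e\cong K_{e-1}[x]/\mu(x)$, where $\mu$ is the minimal polynomial of $\alpha_e^u$ over $K_{e-1}$, and define the $K_{e-1}$-linear functional $\phi_j:F_e\to K_{e-1}$ that extracts the coefficient of $x^{p_e-1}$ in $x^{-j}f\bmod\mu(x)$; then $V_j=\ker\phi_j$. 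With $u_i:=x^{p_e-1+i}$ as a basis of $U$, a short case analysis on the exponent $p_e-1+i-j$ shows that $\phi_j(u_i)=\delta_{ij}$ whenever $i\le j$ (no reduction modulo $\mu$ is triggered, since the exponent lies in $[p_e-\bar{s},p_e-1]\subset[0,p_e-1]$), and that $\phi_j(u_i)$ equals the $x^{p_e-1}$-coefficient of $x^{p_e+(i-j-1)}\bmod \mu$ when $i>j$. Hence the $\bar{s}\times\bar{s}$ matrix $(\phi_j(u_i))$ is upper-triangular with unit diagonal and therefore invertible, completing the proof. The main obstacle is recognizing this clean triangular structure; the explicit, minimal-polynomial-dependent off-diagonal entries play no further role in the argument.
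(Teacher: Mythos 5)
Your proof is correct, but it takes a genuinely different route from the paper's. The paper disposes of this lemma by adapting Lemma~1 of \cite{tamo2018repair}: using Proposition~\ref{prop:basis} it proves the \emph{spanning} containment $\beta_e^j F_e\subset \tilde{K}_e:=S_e+S_e\alpha_e^{u}+\cdots+S_e\alpha_e^{u(\bar{s}-1)}$ for every $j$ by explicitly exhibiting each element of $\beta_e^j\alpha_e^{uj}\{1,\alpha_e^{u},\ldots,\alpha_e^{u(p_e-1)}\}$ inside the Minkowski sum (the top power being absorbed via the special generator $\alpha_e^{u(p_e-1)}\sum_j\beta_e^j$), and then $\dim_{K_{e-1}}S_e=p_e$ follows from the count $\bar{s}p_e=[K_e:K_{e-1}]$. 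You argue in the dual direction: you show the $\bar{s}p_e$ generators of the Minkowski sum are $K_{e-1}$-linearly \emph{independent}, by decomposing along the $F_e$-basis $\{1,\beta_e,\ldots,\beta_e^{\bar{s}-1}\}$, observing that the type-(b) generators contribute the \emph{same} element $u\in U$ to every component, reducing to $U\cap\bigcap_j V_j=\{0\}$, and settling that with the coefficient functionals $\phi_j$ and a triangular matrix with unit diagonal; spanning then follows from the same dimension count run in reverse. Both arguments are sound and rest on the same two structural facts (Proposition~\ref{prop:basis} and $[K_e:F_e]=\bar{s}$). What yours buys is a fully self-contained verification (the paper defers all details to \cite{tamo2018repair}), at the cost of the minimal-polynomial model $K_{e-1}[x]/\mu(x)$ and the functional/determinant machinery; the paper's borrowed computation is more elementary and constructive. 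One cosmetic remark: with rows indexed by $i$ and columns by $j$, the relation $\phi_j(u_i)=\delta_{ij}$ for $i\le j$ makes the matrix lower- rather than upper-triangular, but triangularity with unit diagonal is all the argument needs.
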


\begin{proof}
	By Proposition~\ref{prop:basis}, the set 
	$\{1,\alpha_e^{u},\ldots, \alpha_e^{u(p_e-1)}\}$ is a basis for $F_e$ over $K_{e-1}$.
	With this observation, the proof of \cite[Lemma~1]{tamo2018repair} can be followed closely. Namely, one can show $\beta_e^j F_e \subset \tilde{K}_e:=S_{e}+S_{e}\alpha_e^{u}+\dots+S_{e}\alpha_e^{u(\bar{s}-1)}$ for all $j=0,\ldots,\bar{s}-1$ by demonstrating $\beta_e^j\alpha_e^{uj}\{1,\alpha_e^{u},\ldots, \alpha_e^{u(p_e-1)}\}\subset \tilde{K}_e$. The detailed steps are omitted here.
\end{proof} 

In view of Lemma~\ref{le:rs-subspaces}, we can construct another basis for $\mathbb{K}$ over $\ff_q$, given in the next lemma. A proof of the lemma that resembles the proof of \cite[Lemma~10]{chen2020enabling} is presented in Appendix~\ref{app:rep}. 

\begin{lemma}\label{le:rep}
	For $i=0,\ldots,l-1$, let $\cE_i=\{e\in\{1,\ldots,\bar{n}\}\mid (i_e, i'_e)=(p_e-1,\bar{s}-1) \}$ and let 
	\[
	b_i=\prod_{x=1}^{\bar{n}}\alpha_x^{ui_x}\prod_{y\in\cE_i}\Big(\sum_{j=0}^{\bar{s}-1}\beta_{y}^{j}\Big)\prod_{y\notin \cE_i}\beta_{y}^{i'_{y}}.
	\] Then the set $B:=\{b_i\mid i=0,\ldots,l-1\}$ is a basis of $\mathbb{K}$ over $\ff_q$.
	
	Furthermore, for $e=1,\ldots,\bar{n}$, let $A_e=\{a_i\in A\mid (i_e,i'_{e})=(0,0)\}$ and $B_e=\{b_i\in B\mid (i_e,i'_{e})=(0,0)\}$. Then
	\[
	\Span_{\ff_q}A_e = \Span_{\ff_q} B_e.
	\]
\end{lemma}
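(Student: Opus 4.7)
The plan is to exhibit an explicit invertible $\ff_q$-linear endomorphism $T\colon\mathbb{K}\to\mathbb{K}$ that sends $a_i$ to $b_i$ for every $i$. Since $A$ is already known to be a basis of $\mathbb{K}$ over $\ff_q$, invertibility of $T$ immediately gives that $B=T(A)$ is a basis. To define $T$, I would decompose it as a composition $T=T_{\bar n}\circ\cdots\circ T_1$, where $T_e$ only affects the factor involving $\beta_e$ in the product defining $a_i$. Concretely, $T_e$ should act as the identity on every $a_i$ with $(i_e,i'_e)\neq(p_e-1,\bar s-1)$ and should map $a_i\mapsto\sum_{j=0}^{\bar s-1}a_{i(e,j)}$ when $(i_e,i'_e)=(p_e-1,\bar s-1)$, where $i(e,j)$ denotes the multi-index obtained from $i$ by replacing $i'_e$ with $j$. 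Because distinct $T_e$'s act on disjoint coordinates of the multi-index, they pairwise commute, so the composition is independent of order, and iterating the replacement yields
\[
T(a_i)=\prod_{x=1}^{\bar n}\alpha_x^{ui_x}\prod_{y\notin\cE_i}\beta_y^{i'_y}\prod_{y\in\cE_i}\Big(\sum_{j=0}^{\bar s-1}\beta_y^j\Big)=b_i.
\]

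To see that each $T_e$ (and hence $T$) is invertible, I would organize $A$ into blocks of $\bar s$ basis vectors of the form $\{a_{i(e,j)}:j=0,\ldots,\bar s-1\}$ with $i_e=p_e-1$ and all the other coordinates $(i_x,i'_x)$, $x\neq e$, held fixed. On each such block the matrix of $T_e$ is upper triangular with $1$'s on the diagonal: the first $\bar s-1$ columns are standard basis vectors and the last column is the all-ones vector. On basis vectors with $i_e\neq p_e-1$ the map $T_e$ is the identity. This proves the first claim.

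For the second claim, the key observation is that for $y\neq e$ the transformation $T_y$ only modifies the entry $i'_y$ of the multi-index and leaves $(i_e,i'_e)$ untouched; consequently $T_y(\Span_{\ff_q}A_e)\subseteq\Span_{\ff_q}A_e$. Moreover, every $a_i\in A_e$ satisfies $(i_e,i'_e)=(0,0)\neq(p_e-1,\bar s-1)$, so $T_e$ fixes $A_e$ pointwise. Combining these two facts, $T$ maps $\Span_{\ff_q}A_e$ bijectively to itself; since $T(A_e)=B_e$ by construction, the two subspaces must coincide.

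The main obstacle I anticipate is purely bookkeeping: verifying that the $T_e$'s pairwise commute (which is immediate once one notes that they act on disjoint coordinates of the multi-index) and carefully tracking how the iterated application converts $\prod_y\beta_y^{i'_y}$ into $\prod_{y\notin\cE_i}\beta_y^{i'_y}\prod_{y\in\cE_i}\sum_j\beta_y^j$. Once the block-triangular structure of each $T_e$ is recorded, both the basis property and the subspace identification fall out at once.
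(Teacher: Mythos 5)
Your proof is correct, and it is organized differently from the paper's. The paper proves the first claim by induction on $|\cE_i|$: it expands $b_i$ as $a_i$ plus terms whose index sets $\cE$ are strictly smaller, concludes that every $a_i$ lies in $\Span_{\ff_q}B$, and then invokes $|A|=|B|=l$; the second claim is only sketched ("along the same lines"), using the easy inclusion $\Span_{\ff_q}B_e\subseteq\Span_{\ff_q}A_e$ plus a repeat of the induction. You instead factor the transition from $A$ to $B$ as an explicit composition $T=T_{\bar{n}}\circ\cdots\circ T_1$ of commuting maps, each block-diagonal with $\bar{s}\times\bar{s}$ unitriangular blocks (the last column being all ones), so invertibility is immediate and $B=T(A)$ is a basis without any cardinality bookkeeping beyond $A$ being a basis. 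Both arguments exploit the same underlying triangular relation $b_i=a_i+(\text{terms with smaller }\cE)$, but your operator formulation pays off on the second claim: since each $T_x$ with $x\neq e$ maps $A_e$ into $\Span_{\ff_q}A_e$ and $T_e$ fixes $A_e$ pointwise (note $p_e-1\geq 1$, so $(0,0)\neq(p_e-1,\bar{s}-1)$), $T$ restricts to an automorphism of the finite-dimensional space $\Span_{\ff_q}A_e$ carrying $A_e$ onto $B_e$, which yields $\Span_{\ff_q}A_e=\Span_{\ff_q}B_e$ in one stroke rather than by a second induction. The paper's route is shorter to state; yours makes the change-of-basis structure explicit and handles both claims uniformly.
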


With these properties of the finite field $\mathbb{K}$ at hand, we are almost ready to describe the repair scheme of the code $\cC$. What is still missing is the representation of the each coordinate $c_{e,g}$, which is a symbol in $\mathbb{K}$, as a vector over $\ff_q$. As one may expect, the representation of each node as an $\ff_q$-vector is crucial to access complexity of the repair scheme.

Our choice of the representation relies on the \emph{trace-dual basis} of the basis $(b_i)$ constructed in Lemma~\ref{le:rep}. Recall that the \emph{trace mapping} $\tr_{\mathbb{K}/\ff_q}$ is given by $x\mapsto 1+x^{|\ff_q|}+x^{|\ff_q|^2}+\cdots+x^{|\ff_q|^{l-1}}$. For a basis $(\gamma_i)$ of $\mathbb{K}$ over $\ff_q$, the trace-dual basis of $(\gamma_i)$ is a basis $(\gamma_i^*)$ for $\mathbb{K}$ over $\ff_q$ that satisfies $\tr_{\mathbb{K}/\ff_q}(\gamma_i\gamma_j^*)=\mathbbm{1}_{\{i=j\}}$ for all $i,j$. As a consequence of this property, for an element $x\in\mathbb{K}$, the coefficients of its expansion in the dual basis $(\gamma_i^*)$ can be found using the basis $(\gamma_i)$ since $x=\sum_{i=0}^{l-1}\tr_{\mathbb{K}/\ff_q}(x\gamma_i)\gamma_i^*$. In other words, the mapping given by $x\mapsto (\tr_{\mathbb{K}/\ff_q}(x\gamma_i))_{0\leq i\leq l-1}$ is bijection. Thus, knowing the coefficients $(\tr_{\mathbb{K}/\ff_q}(x\gamma_i))_{0\leq i\leq l-1}$ suffices to recover the element $x\in\mathbb{K}$. 

The repair scheme of $\cC$ is based on a careful selection of codewords in $\cC^\perp$. Since $\cC^\perp$ itself is a
(generalized) RS code, there is a vector $\omega=(\omega_{e,g})_{1\le e\le\bar n; 1\le g\le u}\in({\mathbb K}^\ast)^n$ such that any codeword of $\cC^\perp$ has the form
$(\omega_{e,g}f(\lambda_{e,g}))_{1\le e\le\bar n; 1\le g\le u},$ where $f\in {\mathbb K}[x]$ is a polynomial of degree at most $r-1.$ Taking the vector $\omega$ into account and using the dual basis $(b_i^{\ast})$ of $(b_i)$, each node $c_{e,g},1\leq e\leq \bar{n},1\leq g\leq u$ is represented by
\begin{align}
	c_{e,g}=\omega_{e,g}^{-1} \sum_{i=0}^{l-1}c_{e,g,i}b_{i}^{\ast}.\label{eq:rs-node}
\end{align} As we will see in Theorem~\ref{thm:rs}, this representation will be instrumental in reducing the access complexity of the repair scheme.


\begin{theorem}\label{thm:rs}
	The code $\cC$ supports optimal repair of $h\leq \eta$ failed nodes in any single rack from any $\bar d$ helper racks. Furthermore, the number of $\ff_q$-symbols accessed on the helper racks is at most $\bar{d}ul/\bar{s}$.
\end{theorem}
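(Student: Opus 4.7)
The plan is to adapt the trace-based repair scheme of Tamo-Ye-Barg and Chen-Barg to the rack-aware setting, exploiting both the multiplicative structure of the evaluation points (which cluster each rack on a coset of the order-$u$ subgroup) and the carefully designed field tower. Throughout, I will work with the dual-basis representation \eqref{eq:rs-node}, so that an $\ff_q$-symbol of node $c_{e,g}$ is obtained by pairing $\omega_{e,g}c_{e,g}$ with an element of $B$ via the trace form. Let $e'$ denote the index of the host rack, $\cJ\subset\{1,\dots,\bar n\}\setminus(\cR\cup\{e'\})$ the non-helper racks (so $|\cJ|=\bar r-\bar s$), and $g'_1,\dots,g'_h$ the indices of the failed nodes within rack $e'$.

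First, I will construct $hl$ repair polynomials. Because $\prod_{g=1}^u(x-\lambda_{e,g})=x^u-\alpha_e^u$, any polynomial vanishing on all non-helper evaluation points factors as $\Phi(x)\prod_{e\in\cJ}(x^u-\alpha_e^u)$ with $\deg\Phi\le\bar s u-v-1$. For each index $i\in\{0,\dots,l-1\}$ and each failed node $s\in\{1,\dots,h\}$ I will choose
\[
f_{s,i}(x)=b_i\cdot\Phi_s(x)\cdot\prod_{e\in\cJ}(x^u-\alpha_e^u),
\]
where $b_i$ is the basis element from Lemma~\ref{le:rep} and $\Phi_s(x)$ is a low-degree polynomial designed so that (i) $\deg f_{s,i}\le r-1$, and (ii) when $x$ ranges over $\Omega_{e'}$, the values $\Phi_s(\lambda_{e',g'_{s'}})$ form an $h\times h$ invertible pattern that isolates the $s$-th failed node from the others (a natural choice is $\Phi_s(x)=\prod_{s'\ne s}(x-\lambda_{e',g'_{s'}})$ multiplied by a small correction, using $h\le\eta=u-v$ to stay within the degree budget). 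Each $f_{s,i}$ induces a parity-check via $\sum_{e,g}\omega_{e,g}c_{e,g}f_{s,i}(\lambda_{e,g})=0$, and taking the $\ff_q$-trace produces an $\ff_q$-linear equation in the symbols $c_{e,g,i}$. I will show this family of $hl$ equations is non-degenerate and, together with the local symbols $c_{e',g,\cdot}$ for $g\notin\{g'_1,\dots,g'_h\}$, determines the failed symbols uniquely.

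For the bandwidth, I will argue that the per-rack contributions can be packed into a $\ff_q$-subspace of dimension $hl/\bar s$. Fix $e\in\cR$; the contribution of rack $e$ to all equations is determined by the collection of values $\{\sum_{g=1}^u\omega_{e,g}c_{e,g}f_{s,i}(\lambda_{e,g})\}_{s,i}$. Because $\lambda_{e,g}^u=\alpha_e^u$ and $\Phi_s$ depends on $x$ (not only on $x^u$), these sums involve at most $\bar s\eta=\theta$ distinct $\mathbb K$-linear combinations of $\{\omega_{e,g}c_{e,g}\}_{g=1}^{u}$; moreover, each such sum lies in $b_i\cdot K_{e'}\cdot(\text{constant})$, and by Lemma~\ref{le:rs-subspaces} the relevant $K_{e'}$-part factors through the $\bar s$-fold decomposition $K_{e'}=\bigoplus_{t=0}^{\bar s-1}S_{e'}\alpha_{e'}^{ut}$. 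Projecting onto $S_{e'}$ (which has $\ff_q$-dimension equal to $1/\bar s$ of that of $K_{e'}$) and summing over $i$, the total $\ff_q$-information needed from rack $e$ will come out to $hl/\bar s$, matching \eqref{eq:rack-bound-h}.

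The hard part will be the access bound. Here the role of Lemma~\ref{le:rep} becomes essential: the basis $B$ was engineered so that $\Span_{\ff_q}A_e=\Span_{\ff_q}B_e$, which means the coordinate axes defined by $(b_i^{\ast})$ and by $(a_i^{\ast})$ agree exactly on the complement of the "affected" indices in the $e$-th slot. Concretely, for each helper node $(e,g)$, I will express $f_{s,i}(\lambda_{e,g})$ in the basis $B$ and show that only the coefficients indexed by $i$ with $i'_{e'}=0$ are nonzero — a set of size $l/\bar s$. Using the duality $c_{e,g,i}=\tr_{\mathbb K/\ff_q}(\omega_{e,g}c_{e,g}b_i)$, the trace equations will therefore depend on only $l/\bar s$ of the $\ff_q$-symbols of each helper node, giving total access at most $\bar d u l/\bar s$. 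To push this through cleanly, I will need to reuse the trick of generating $\eta$ distinct $\mathbb K$-linear combinations from the \emph{same} set of accessed symbols (the idea highlighted at the end of Section~\ref{sec:construction}): this is what prevents the access count from growing by a factor of $\eta$ when $h$ is as large as $\eta$, and it is the source of the (possibly slack) factor $(u-v)/h$ in the general bound compared to \eqref{eq:rack-access}.
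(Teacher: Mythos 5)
Your overall framework (trace-type repair with an annihilator on the non-helper racks, the dual basis of Lemma~\ref{le:rep}, and reuse of the same accessed symbols for several downloaded combinations) is the right one, but the specific scheme you sketch has a gap at exactly the step that makes the theorem work. Your separating polynomials $\Phi_s(x)=\prod_{s'\ne s}(x-\lambda_{e',g'_{s'}})$ carry the host-rack element $\alpha_{e'}$ in their coefficients, so at a helper point $\lambda_{e,g}$ the dual coefficient $b_i\,\Phi_s(\lambda_{e,g})\prod_{e''\in\cJ}(\alpha_e^u-\alpha_{e''}^u)$ in general does \emph{not} lie in $\Span_{\ff_q}A_{e'}$, and you give no mechanism forcing it to. Consequently the key support claim --- that after pairing with $b_i^{\ast}$ only coordinates from a fixed set of size $l/\bar s$ (you propose $\{i:i'_{e'}=0\}$) are touched --- fails, and with it the access bound; it also fails for the bandwidth, since the helper-rack contributions no longer compress. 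The paper's proof avoids this by never making the dual polynomial vanish at the other failed nodes: it uses the monomials $x^{uw+z}h(x)$ with $w=0,\ldots,\bar s-1$, $z=0,\ldots,h-1$, whose helper-side evaluations $\alpha_e^{uw+z}h(\lambda_{e,g})$ (note $h(\lambda_{e,g})=\prod_{e''}(\alpha_e^u-\alpha_{e''}^u)$ is constant on the rack) are free of $\alpha_{e'},\beta_{e'}$, hence lie in $\Span_{\ff_q}A_{e'}=\Span_{\ff_q}B_{e'}$ by Lemma~\ref{le:rep}; the separation of the $h$ failed nodes is deferred to the host rack, where one recovers the aggregates $\sum_{g\le h}\lambda^{(g-1)z}\alpha_{e'}^z\omega_{e',g}h(\lambda_{e',g})c_{e',g}$ for $z=0,\ldots,h-1$ and inverts an $h\times h$ Vandermonde-type matrix built from the distinct elements $\lambda^{g-1}\alpha_{e'}$.

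There is a second, related gap in the bandwidth count. Using all $l$ multipliers $b_i$ together with a single polynomial per failed node produces $hl$ trace equations whose helper side generically requires the full $\ff_q$-expansion of the $h$ per-rack aggregates, i.e.\ $hl$ symbols per rack --- a factor $\bar s$ above \eqref{eq:rack-bound-h}. The factor-$\bar s$ collapse in the paper comes from restricting the multipliers to the $l/\bar s$-element set $D\cdot E_{e'}$ and compensating at the host rack with the powers $\alpha_{e'}^{uw}$ supplied by $x^{uw}$: Lemma~\ref{le:rs-subspaces} guarantees $\{\delta\xi\alpha_{e'}^{uw}\}$ is an $\ff_q$-basis of $\mathbb K$, while at a helper rack $\lambda_{e,g}^{uw}=\alpha_e^{uw}$ is constant, so the same downloaded (and accessed) data is reused for every $w$ and every $z$. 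Your proposed ``projection onto $S_{e'}$'' of the helper contribution is not a well-defined operation on that data and does not substitute for this mechanism; without it neither the $\bar d h l/\bar s$ download nor the $\bar d u l/\bar s$ access bound follows from your construction.
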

\begin{proof}  
	Without loss of generality, assume that
	$c_{1,1},\ldots,c_{1,h}$ are the failed nodes so the index of the host rack is $1$.
	Denote by 
	$\cR\subset\{2,3,\ldots,\bar{n}\},|\cR|=\bar{d}$ the set of helper racks. The repair relies on the information downloaded
	from all the nodes in $\cR$ and the functional nodes in the host rack. Define the annihilator polynomial of the set of
	locators of all the nodes not in the helper racks $\cR$ and the host rack: 
	\begin{align}\label{eq:h}
		h(x) = \prod_{\substack{e\in\{1,\ldots,\bar{n}\}\setminus(\cR\cup\{1\}),\\1\leq g\leq u}} (x-\lambda_{e,g}).
	\end{align}
	Let $t=wu +z$, where $w = 0,\ldots,\bar{s}-1$ and $z=0,\ldots,h-1$. Since 
	\begin{align*}
		\deg x^th(x)\leq (\bar{s}-1)u+\eta-1 +(\bar{n}-\bar{d}-1)u = \bar{r}u-v-1 < r=n-k,
	\end{align*}
	we have \[(\omega_{1,1}\lambda_{1,1}^t h(\lambda_{1,1}),\ldots,\omega_{\bar{n},u}\lambda_{\bar{n},u}^t h(\lambda_{\bar{n},u}) )\in\cC^{\perp}.\]
	So the inner product of this dual codeword and the codeword $c$ is zero. In other words, we have
	\begin{align*}
		\sum_{g=1}^u \omega_{1,g}\lambda_{1,g}^t h(\lambda_{1,g}) c_{1,g} 
		& = -\sum_{e=2}^{\bar{n}}\sum_{g=1}^{u} \omega_{e,g}\lambda_{e,g}^t h(\lambda_{e,g}) c_{e,g}.
	\end{align*}
	Let $E_{e}$ be the set defined in Lemma~\ref{le:rs-subspaces} where $e\in\{1,\ldots,\bar{n}\}$ and let $\xi_{e}\in E_e$.
	For notational convenience, define $D=\{\prod_{e=2}^{\bar{n}}\xi_{e}\prod_{e'=2}^{\bar{n}}\alpha_{e'}^{uw_{e'}} \mid \xi_{e}\in E_e, w_{e'}\in\{0,\ldots,\bar{s}-1\}\}$ and let $\delta\in D$.
	Multiplying both sides of the above equation by $\delta\xi_1$ and evaluating the absolute trace $\tr:=\tr_{\mathbb{K}/\ff_q}$, we obtain
	\begin{align}
		\sum_{g=1}^u\tr\Big(\delta \xi_{1}
		\omega_{1,g}\lambda_{1,g}^t h(\lambda_{1,g}) c_{1,g} \Big)
		& = -\sum_{e=2}^{\bar{n}}\sum_{g=1}^{u}\tr\Big(\delta \xi_{1}
		\omega_{e,g}\lambda_{e,g}^t h(\lambda_{e,g}) c_{e,g} \Big) \label{eq:repair-11}\\
		& = -\sum_{e\in\cR}\sum_{g=1}^{u}\tr\Big(\delta \xi_{1}
		\omega_{e,g}\lambda_{e,g}^t h(\lambda_{e,g}) c_{e,g} \Big)\label{eq:r}\\
		& = -\sum_{e\in\cR}\sum_{g=1}^{u} \lambda^{(g-1)z} 
		\tr\Big(\delta \xi_{1}
		\omega_{e,g}\alpha_{e}^t h(\lambda_{e,g}) c_{e,g} \Big)\label{eq:rack-combo}\\
		& = -\sum_{e\in\cR}\sum_{g=1}^{u} \lambda^{(g-1)z}\sum_{i=0}^{l-1}\tr\Big(\delta \xi_{1}
		\alpha_{e}^t h(\lambda_{e,g}) b_{i}^{\ast} \Big)c_{e,g,i}.\label{eq:node-rep}
	\end{align}
	In the equations above, \eqref{eq:r} follows by \eqref{eq:h}; \eqref{eq:rack-combo} follows by \eqref{eq:alpha-eg} and the fact that $\lambda^{(g-1)t}=\lambda^{(g-1)z}\in\ff_q$; \eqref{eq:node-rep} follows on account of \eqref{eq:rs-node}.
	Note that the left-hand side of \eqref{eq:repair-11} only involves nodes in the host rack while the right-hand of \eqref{eq:node-rep} depends only on the nodes in the helper racks. Therefore, from the content of the functional nodes in the host rack and the helper racks, one is able to find
	\begin{align}
		\Big\{\sum_{g=1}^{h}\tr\Big(\delta \xi_{1}
		\omega_{1,g}\lambda_{1,g}^{uw+z} h(\lambda_{1,g}) c_{1,g} \Big) \mid
		\delta\in D;
		\xi_{1}\in E_{1};
		w=0,\ldots,\bar{s}-1;z=0,\ldots,h-1
		\Big\}.\label{eq:repair-tr}
	\end{align}	
	Next, we show that \eqref{eq:repair-tr} suffices to recover the values of $c_{1,1},\ldots,c_{1,h}$. Indeed, 
	on account of Lemma~\ref{le:rs-subspaces}, the set 
	\begin{align*}
		\Big\{ \delta \xi_{1}
		\alpha_{1}^{uw}  \mid
		\delta\in D;
		\xi_{1}\in E_{1};
		w=0,\ldots,\bar{s}-1
		\Big\}
	\end{align*} is a basis for $\mathbb{K}$ over $\ff_q$. Noticing 
	\begin{align*}
		\sum_{g=1}^{h}\tr\Big(\delta \xi_{1}
		\omega_{1,g}\lambda_{1,g}^{uw+z} h(\lambda_{1,g}) c_{1,g} \Big) &= \tr\Big(\delta \xi_{1} \alpha_{1}^{uw}
		\sum_{g=1}^{h}\lambda^{(g-1)z}\alpha_{1}^{z} \omega_{1,g}h(\lambda_{1,g}) c_{1,g} \Big),
	\end{align*} it follows that the values $\sum_{g=1}^{h}\lambda^{(g-1)z}\alpha_{1}^{z} \omega_{1,g}h(\lambda_{1,g}) c_{1,g}$ can be recovered for each $z=0,\ldots,h-1$, since 
	the mapping
	\begin{equation*}
		\sum_{g=1}^{h}\lambda^{(g-1)z}\alpha_{1}^{z} \omega_{1,g}h(\lambda_{1,g}) c_{1,g}\mapsto \sum_{g=1}^{h}\tr\Big(\delta \xi_{1}
		\omega_{1,g}\lambda_{1,g}^{uw+z} h(\lambda_{1,g}) c_{1,g} \Big),\
		\delta\in D;
		\xi_{1}\in E_{1};
		w=0,\ldots,\bar{s}-1
	\end{equation*}
	is a bijection. Furthermore, observe that 
	\begin{align*}
		\begin{bmatrix}
			1 & 1 & \ldots & 1\\
			\alpha_{1} & \lambda\alpha_{1} & \ldots & \lambda^{(h-1)}\alpha_{1}\\
			\vdots 						 & \vdots						& \ddots & \vdots\\
			\alpha_{1}^{h-1} & \lambda^{h-1}\alpha_{1}^{h-1} & \ldots & \lambda^{(h-1)^2}\alpha_{1}^{h-1}
		\end{bmatrix}
		\begin{bmatrix}
			\omega_{1,1}h(\lambda_{1,1})c_{1,1}\\
			\omega_{1,2}h(\lambda_{1,2})c_{1,2}\\
			\vdots\\
			\omega_{1,h}h(\lambda_{1,h})c_{1,h}
		\end{bmatrix}
		=\begin{bmatrix}
			\sum_{g=1}^{h} \omega_{1,g}h(\lambda_{1,g}) c_{1,g}\\
			\sum_{g=1}^{h}\lambda^{(g-1)}\alpha_{1} \omega_{1,g}h(\lambda_{1,g}) c_{1,g}\\
			\vdots\\
			\sum_{g=1}^{h}\lambda^{(g-1)(h-1)}\alpha_{1}^{h-1} \omega_{1,g}h(\lambda_{1,g}) c_{1,g}
		\end{bmatrix}.
	\end{align*} Since the elements $\lambda^{g-1}\alpha,g=1,\ldots,h$ are distinct, the values 
	\begin{align*}
		\omega_{1,g}h(\lambda_{1,g})c_{1,g}, \ g = 1,\ldots,h
	\end{align*} can be found from the values $\sum_{g=1}^{h}\lambda^{(g-1)z}\alpha_{1}^{z} \omega_{1,g}h(\lambda_{1,g}) c_{1,g},z=0,\ldots,h-1$. 
	Moreover, since $\omega_{1,g}\neq 0$ and $h(\lambda_{1,g})\neq 0$, the values 
	\begin{align*}
		c_{1,g}, \ g = 1,\ldots,h
	\end{align*} can also be found and thus repair of $c_{1,1},\ldots,c_{1,h}$ is possible.
	
	It remains to calculate the amount of bandwidth and access incurred in this repair procedure. For every $\delta\in D,\xi_{1}\in E_{1},w=0,\ldots,\bar{s}-1,$ and $z=0,\ldots,h-1$, the elements 
	\begin{equation}\label{eq:download}
		\sum_{g=1}^{u} \lambda^{(g-1)z}\sum_{i=0}^{l-1}\tr\Big(\delta \xi_{1}
		\alpha_{e}^{uw+z} h(\lambda_{e,g}) b_{i}^{\ast} \Big)c_{e,g,i}
	\end{equation}
	are downloaded from helper rack $e$. Since $\delta\in D$ and $e\neq 1$, by $\eqref{eq:h}$ and the definition of $D$, the element $\delta \alpha_{e}^{uw+z} h(\lambda_{e,g})$ is in $\Span_{\ff_q} A_1$. Thus, by Lemma~\ref{le:rep}, $\delta \alpha_{e}^{uw+z} h(\lambda_{e,g})$ can be written as an $\ff_q$-linear combination of elements in $B_1$. It follows that the element $\delta \xi_{1}	\alpha_{e}^{uw+z} h(\lambda_{e,g}) b_{i}^{\ast}$ can be written as an $\ff_q$-linear combination of the elements in $\xi_1 B_1 := \{\xi_1 b \mid  b \in B_1\}$. By construction of the set $B$, we have $\xi_1 B_1\subset B$. Since $(b_i^*)$ is the dual basis of $B$, at most $|\xi_1B_1|=|B_1|=l/(\bar{s}p_1)$ terms of $\tr(\delta \xi_{1}\alpha_{e}^{uw+z} h(\lambda_{e,g}) b_{i}^{\ast} ),i=0,\ldots,l-1$ evaluate to nonzero. The values of $i=0,\ldots,l-1$ such that $\tr(\delta \xi_{1}\alpha_{e}^{uw+z} h(\lambda_{e,g}) b_{i}^{\ast} )$ does not vanish correspond to the values of $c_{e,g,i}$ accessed for calculating \eqref{eq:download}. Hence, the number of symbols accessed on the helper racks $\cR$ for repair is at most 
	\begin{align}
		u|\cR|\Big|\bigcup_{\xi_1\in E_1}\xi_1 B_1\Big| = u\bar{d}|E_1||B_1|=\frac{\bar{d}ul}{\bar{s}}.
	\end{align}
	Finally, the number of field symbols of $\ff_q$ transmitted in the form of \eqref{eq:download} from the helper racks to the host rack is at most  
	\begin{align}
		h|\cR|\Big|\bigcup_{\xi_1\in E_1}\xi_1 B_1\Big| = \frac{\bar{d}hl}{\bar{s}}.
	\end{align}
	This meets the bound \eqref{eq:rack-bound-h} with equality, and proves the claim of optimal repair.
\end{proof}

\begin{corollary}
	The code $\cC$ supports optimal repair of $\eta$ failed nodes in any single host rack from any $\bar d$ helper racks with optimal access.
\end{corollary}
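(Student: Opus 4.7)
The plan is to combine the upper bound on the access complexity of the repair scheme established in Theorem~\ref{thm:rs} with the lower bound on access complexity derived in Theorem~\ref{thm:bound}, specialized to the case $h=\eta=u-v$. Since Theorem~\ref{thm:rs} already establishes optimal repair bandwidth for any $h\leq \eta$ failed nodes within a single rack, it only remains to verify that the access complexity bound $\bar{d}ul/\bar{s}$ obtained there matches the rack-aware optimal-access lower bound from Theorem~\ref{thm:bound} when $h=\eta$.

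First I would substitute $h=\eta=u-v$ into the lower bound \eqref{eq:rack-access} of Theorem~\ref{thm:bound} to obtain
\begin{align*}
\alpha \;\geq\; \frac{h\bar{d}ul}{\bar{s}(u-v)} \;=\; \frac{(u-v)\bar{d}ul}{\bar{s}(u-v)} \;=\; \frac{\bar{d}ul}{\bar{s}}.
\end{align*}
Next I would invoke Theorem~\ref{thm:rs}, which guarantees that the repair scheme for $\cC$ accesses at most $\bar{d}ul/\bar{s}$ symbols of $\ff_q$ on the helper racks for repairing any $h\leq \eta$ failed nodes within a single rack. Setting $h=\eta$, the number of accessed symbols coincides with the lower bound, so the repair scheme meets the access bound with equality.

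Finally, I would note that Theorem~\ref{thm:rs} also shows that the repair bandwidth attains the bound \eqref{eq:rack-bound-h} with equality for any $h\leq \eta$, and in particular for $h=\eta$. By the definition of rack-aware MSR codes with optimal access given in Section~\ref{sec:bound}, simultaneous attainment of \eqref{eq:rack-bound-h} and \eqref{eq:rack-access} suffices to conclude the claim. The only potential obstacle would be checking admissibility of the parameter choice $h=\eta$, namely that $\eta\leq \min\{u,\bar{s}u-v\}$; but this is immediate since $\eta=u-v\leq u$ and $u-v\leq \bar{s}u-v$ for $\bar{s}\geq 1$. Hence $\cC$ is a rack-aware scalar MSR code with optimal access for the repair of $\eta$ failed nodes in a single host rack from any $\bar{d}$ helper racks, completing the proof.
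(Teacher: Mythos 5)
Your proposal is correct and follows essentially the same route as the paper: specialize the lower bound \eqref{eq:rack-access} to $h=\eta=u-v$, observe it equals $\bar{d}ul/\bar{s}$, and match it with the access guarantee of Theorem~\ref{thm:rs}. The paper's own proof is exactly this one-line comparison; your additional checks of bandwidth optimality and of the admissibility condition $\eta\leq\min\{u,\bar{s}u-v\}$ are harmless elaborations of the same argument.
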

\begin{proof}
	By Theorem~\ref{thm:rs}, the code $\cC$ supports optimal repair of $\eta$ failed nodes by accessing at most $\bar{d}ul/\bar{s}$ symbols of $\ff_q$ of any $\bar{d}$ helper racks, which is the smallest possible according to the bound \eqref{eq:rack-access} for the case $h=\eta=u-v$.
\end{proof}

\section{Concluding remarks}\label{sec:con}
We have presented a lower bound on the access complexity of linear repair schemes of MDS linear codes that support repairing multiple failed nodes in a single rack with optimal bandwidth in the rack model. By constructing explicit codes, we have also shown this bound is attainable for all admissible parameters when there is a single failed node or when there are exactly $u-v$ failed nodes.  

The problem of repairing multiple failed nodes of MDS codes in the rack model remains largely open. One interesting question for future research is to construct rack-aware MSR codes with optimal access for any $h\leq \min\{u,\bar{s}u-v\}$ failed nodes in a single rack. In fact, it is still unknown how to construct MDS codes in the rack model that can repair any $h\leq \min\{u,\bar{s}u-v\}$ failed nodes in a single rack with optimal repair bandwidth, let alone optimal access. Going beyond a single host rack, another interesting question is to investigate the repair bandwidth when the failed nodes are arbitrarily distributed among multiple host racks.

\appendices
\addtocontents{toc}{\protect\setcounter{tocdepth}{0}}
\section{Proof of Proposition~\ref{prop:rack-bound-h}}\label{app:rack-bound-h}
\begin{proof}
	Let $\cR,|\cR|=\bar{d}$ be the set of helper racks and let $\cI\subset \cR, |\cI|=\bar k-1$ be a subset of the helper racks. 
	Note that the number of effective helper nodes is $d:=\bar{d}u+u-h$ as there are $u$ nodes in each helper rack and $u-h$ functional nodes in the host rack that may participate in the repair process.
	Since $|\cR\setminus\cI|u =(\bar{d}-\bar k+1)u\ge d-k+h$, the MDS property of the code $\cC$ implies that 
	in order for the $d$ effective helper nodes to recover the $h$ failed nodes, 
	the amount of information provided by the racks in $\cR\setminus\cI$ should satisfy
	\begin{equation}\label{eq:l}
		\sum_{i\in \cR\backslash\cI}\beta_{i}\ge hl.
	\end{equation}
	Let us sum the left-hand side on all $\cI\subset\cR, |\cI|=\bar k-1:$
	$$
	\sum_{\begin{substack}{\cI\subset \cR\\|\cI|=\bar k-1}\end{substack}} \sum_{i\in \cR\backslash\cI}\beta_{i}=
	\sum_{i\in\cR}\sum_{\begin{substack}{\cI\subset \cR\\ \cI\not\ni i}\end{substack}}\beta_{i}=\binom{\bar d-1}{\bar k -1}\sum_{i\in \cR}\beta_{i}.
	$$
	Together with \eqref{eq:l} we obtain 
	$$
	\binom{\bar d-1}{\bar k -1}\sum_{i\in \cR}\beta_{i}\ge \binom{\bar d}{\bar k-1}hl
	$$
	or
	$$
	\sum_{i\in \cR}\beta_{i}\ge \frac{h\bar d l}{\bar d-\bar k+1},
	$$
	i.e., \eqref{eq:rack-bound-h}. Moveover, this bound holds with equality if and only if \eqref{eq:l} holds with equality for every $\cI\subset \cR,
	|\cI|=\bar k-1$. We claim that $\eqref{eq:l}$ holds with equality if and only if $\beta_i=hl/\bar s$ for all $i\in\cR$, where $\bar s=\bar d-\bar k+1$. Note that this claim is trivial if $\bar{s}=1$. Suppose $\bar{s}>1$ and for the sake of contradiction that there is a rack $i$ such that 
	$\beta_{i}\ne hl/\bar s,$ for instance, $\beta_i< l/\bar s$. Let $\cJ\subset \cR,|\cJ|=\bar s$, and $i\in \cJ.$ There must be a rack $i_1\in \cJ$ that 
	contributes more than the average number of symbols, i.e., $\beta_{i_1}>hl/\bar s.$ Consider $i_2\in \cR\setminus\cJ$ (which exists since $\bar{k}>1$ implies $|\cJ|<|\cR|$) and the subset $(\cJ\backslash\{i\})\cup \{i_2\}$. By \eqref{eq:l}, we have that $\beta_{i_2}<hl/\bar s.$
	Now for the subset $(\cJ\backslash\{i_1\})\cup\{i_2\}$, \eqref{eq:l} fails to hold with equality, which is a contradiction.
\end{proof}

\section{Proof of Lemma~\ref{le:rep}}\label{app:rep}
\begin{proof}
	Since $|A|=|B|=l=[\mathbb{K}:\ff_q]$, we will show that every element $a_i\in A$ can be written as an $\ff_q$-linear combination of the elements in $B$, thereby implying $B$ is basis for $\mathbb{K}$ over $\ff_q$. 
	This is done by induction on the size of subsets of $\{1,\ldots,\bar{n}\}$. Let $\cJ\subset\{1,\ldots,\bar{n}\}$ and let $A(\cJ)=\{a_i\in A \mid (i_e, i'_e)=(p_e-1, \bar{s}-1),e\in\cJ; (i_e, i'_e)\neq (p_e-1, \bar{s}-1),e\notin\cJ \}$. We argue by induction on $|\cJ|$ that $A(\cJ)$ can be linearly generated by $B$ for any subset $\cJ$. If $|\cJ|=0$, i.e., $\cJ=\emptyset$, then clearly $A(\cJ)\subset B$. Let $0< J\leq\bar{n}$ and assume that, for all $\cJ$ such that $|\cJ|\leq J-1$, the elements in $A(\cJ)$ are linearly generated by the elements in $B$. Now let $i\in\{0,\ldots,l-1\}$ be such that $|\cE_i|=J$. Then we have
	\begin{align*}
		a_i = \prod_{x=1}^{\bar{n}}\alpha_{x}^{ui_x}\prod_{y\in\cE_i}\beta_{y}^{\bar{s}-1}\prod_{y\notin\cE_i}\beta_{y}^{i'_{y}}
	\end{align*}
	and
	\begin{align}
		b_i &= \prod_{x=1}^{\bar{n}}\alpha_x^{ui_x}\prod_{y\in\cE_i}\Big(\sum_{j=0}^{\bar{s}-1}\beta_{y}^{j}\Big)\prod_{y\notin \cE_i}\beta_{y}^{i'_{y}}\nonumber\\
		&= \prod_{x=1}^{\bar{n}}\alpha_x^{ui_x}\Big(
		\sum_{j_{y}=0,y\in\cE_i}^{\bar{s}-1}\prod_{y\in\cE_i}\beta_{y}^{j_{y}}\Big)\prod_{y\notin \cE_i}\beta_{y}^{i'_{y}}.\label{eq:bi-J}
	\end{align}
	Multiplying out the sums on the right-hand side of \eqref{eq:bi-J}, we note that the term with all $j_{y}=\bar{s}-1,y\in\cE_i$ equals $a_i$ while the other terms contain fewer than $|\cE_i|=J$ factors of the form $\alpha_{y}^{u(p_{y}-1)}\beta_{y}^{\bar{s}-1}$.
	Each of such terms belongs to some $A(\cJ)$ with $|\cJ|\leq J-1$, and can be linearly generated by $B$ according to the induction hypothesis. This implies that $a_i$ is also expressible as an $\ff_q$-linear combination of the elements in $B$.
	Therefore, for any $\cJ\subset\{1,\ldots,\bar{n}\},0\leq |\cJ|\leq \bar{n}$, the set $A(\cJ)$ can be linearly generated by $B$. 
	
	To prove the second claim, we observe that $\Span_{\ff_q} B_e\subset \Span_{\ff_q} A_e$ for all $e=1,\ldots,\bar{n}$. Therefore, it suffices to show that every element in $A_e$ can be linearly generated by the elements in $B_e$. The proof follows along the same lines as above and thus is omitted.
\end{proof}

	\bibliographystyle{IEEEtran}
	\bibliography{rack-OA}
\end{document}